\newcommand{\del}[0]{\partial}
\newtheorem{theorem}{Theorem}
\newtheorem{proposition}{Proposition}
\begin{document}

\title{Asymptotic optimality of twist-untwist protocols for Heisenberg scaling in atom-based sensing}

\author{T.J.\,Volkoff}
\affiliation{Theoretical Division, Los Alamos National Laboratory, Los Alamos, NM, USA.}
\author{Michael J. Martin}
\affiliation{Materials Physics and Applications Division, Los Alamos National Laboratory, Los Alamos, NM, USA.}

\begin{abstract}
Twist-untwist protocols for quantum metrology consist of a serial application of: 1. unitary nonlinear dynamics (e.g., spin squeezing or Kerr nonlinearity), 2. parameterized dynamics $U(\phi)$ (e.g., a collective rotation or phase space displacement), 3. time reversed application of step 1. Such protocols are known to produce states that allow Heisenberg scaling for experimentally accessible estimators of $\phi$ even when the nonlinearities are applied for times much shorter than required to produce Schr\"{o}dinger cat states. In this work, we prove that, asymptotically in the number of particles, twist-untwist protocols provide the lowest estimation error among quantum metrology protocols that utilize two calls to a weakly nonlinear evolution and a readout involving only first and second moments of a total spin operator $\vec{n}\cdot \vec{J}$. We consider the following physical settings: all-to-all interactions generated by one-axis twisting $J_{z}^{2}$ (e.g., interacting Bose gases), constant finite range spin-spin interactions of distinguishable or bosonic atoms (e.g., trapped ions or Rydberg atoms, or lattice bosons). In these settings, we further show that  the optimal twist-untwist protocols asymptotically achieve 85\%  and 92\% of the respective quantum Cram\'{e}r-Rao bounds. We show that the error of a twist-untwist protocol can be decreased by a factor of $L$ without an increase in the noise of the spin measurement if the twist-untwist protocol can be noiselessly iterated as an $L$ layer quantum alternating operator ansatz.
\end{abstract}
\maketitle

\section{Introduction}\label{sec:intro}
Advances in experimental implementations of squeezing-enhanced quantum metrology protocols \cite{PhysRevA.93.013851,thomp,PhysRevLett.122.030501,PhysRevLett.122.223203,qpm,vuletic} emphasize the fact that non-classicality and many-body entanglement are resources for near-term quantum technologies. In particular, quantum circuits consisting of  alternating squeezing and unsqueezing operations have allowed to amplify signals in systems such as microwave photons \cite{PhysRevLett.120.040505}, cooled mechanical oscillators \cite{winey}, and atomic ensembles (\textit{twist-untwist} protocols) \cite{PhysRevLett.116.053601,PhysRevLett.117.013001}. The possibility of integrating these circuits as modules of variational quantum sensing algorithms further suggests that squeezing-enhanced sensing could be utilized in near-term quantum computers  \cite{PhysRevLett.123.260505}.  

Control of two-body elastic scattering allows generation of entanglement by the phenomenon of spin squeezing \cite{PhysRevLett.86.4431,PhysRevA.47.5138,MA201189}.
Specifically, in the case of spatially interacting atoms as in a dilute quantum gas, combining tunable interatomic interactions such as Feshbach resonances \cite{PhysRevLett.92.160406} with optical trap manipulations can be used modify the strength and type of spin squeezing \cite{folling,PhysRevA.94.042327}. Alternatively, spin squeezing of internal atomic states in  hybrid atomic-optical systems can be produced by controllable Rydberg interactions via laser dressing \cite{bieder,PhysRevA.82.033412,PhysRevLett.104.195302,PhysRevLett.105.160404,mmd,Borregaard_2017}. Although spin squeezing can be analyzed by emphasizing analogies with continuous-variable quadrature squeezing, it is a many-body quantum effect, in the sense that there are spin squeezing quantifiers that imply particle entanglement of a given many-body state \cite{PhysRevA.69.052327}. In constrast, continuous-variable squeezing is not sufficient for entanglement despite it being the cause of several non-classical optical phenomena.

The relation between spin squeezing and quantum metrology is that spin squeezed probe states allow estimate frequencies with error that scales as $1/N^{\beta}$ with $\beta >1$ \cite{wine1,wine2,PhysRevA.47.5138}. In a two-mode bosonic system of $N$ particles, or a system of $N$ qubits, the frequency parameter couples to a population difference operator $J_{z}$ which has operator norm $N/2$. The error of the frequency estimator is bounded below by the reciprocal of the quantum Fisher information (QFI) according to the quantum Cram\'{e}r-Rao inequality (QCRI) \cite{holevo}. When the system is unentangled, a QFI of $N^{\beta}$ with $\beta =1$ is the largest possible; the resulting QCRI is called the standard quantum limit. However, for a system prepared in a Greenberger-Horne-Zeilinger (GHZ) state, a QFI of $N^{\beta}$ with $\beta =2$ is the largest possible; the resulting QCRI is called the Heisenberg limit \cite{PhysRevLett.96.010401}. An estimator with error scaling as $O(N^{-2})$ is said to exhibit Heisenberg scaling. By SU(2) symmetry, analogous statements hold when the parameter to be estimated couples to a general spin operator $\vec{n}\cdot \vec{J}$ with $\Vert \vec{n}\Vert=1$.

 The fact that squeezing-unsqueezing protocols can enhance measurement sensitivity for quantum estimation protocols beyond the standard quantum limit was first elucidated in the context of photonic Mach-Zehnder and four-wave-mixing interferometers \cite{PhysRevA.33.4033}. By replacing the non-passive elements of these photonic interferometers by entanglement-generating atomic interactions such as one-axis or two-axis twisting, one is led to protocols that achieve analogous scaling of the sensitivity with respect to the number of atoms.  Specifically, in the context of phase sensing with atomic ensembles, a single layer one-axis twist-untwist protocol is defined by the parameterized $N$ particle quantum state $\ket{\psi_{\phi}}$ where
\begin{equation}
\ket{\psi_{\phi}}=  e^{i\chi t J_{z}^{2}}e^{-i\phi J_{y}}e^{-i\chi t J_{z}^{2}}  \ket{+}^{\otimes N}.
\label{eqn:iii}
\end{equation}
In (\ref{eqn:iii}), the spin operators satisfy the $\mathfrak{su}(2)$ relation $[J_{i},J_{j}]=i\epsilon_{ijk}J_{k}$, and $\ket{+}^{\otimes N}$ is the maximal $J_{x}$ eigenvector in the spin-${N\over 2}$ representation of $SU(2)$.
\begin{figure*}[t!]
\includegraphics[scale=0.35]{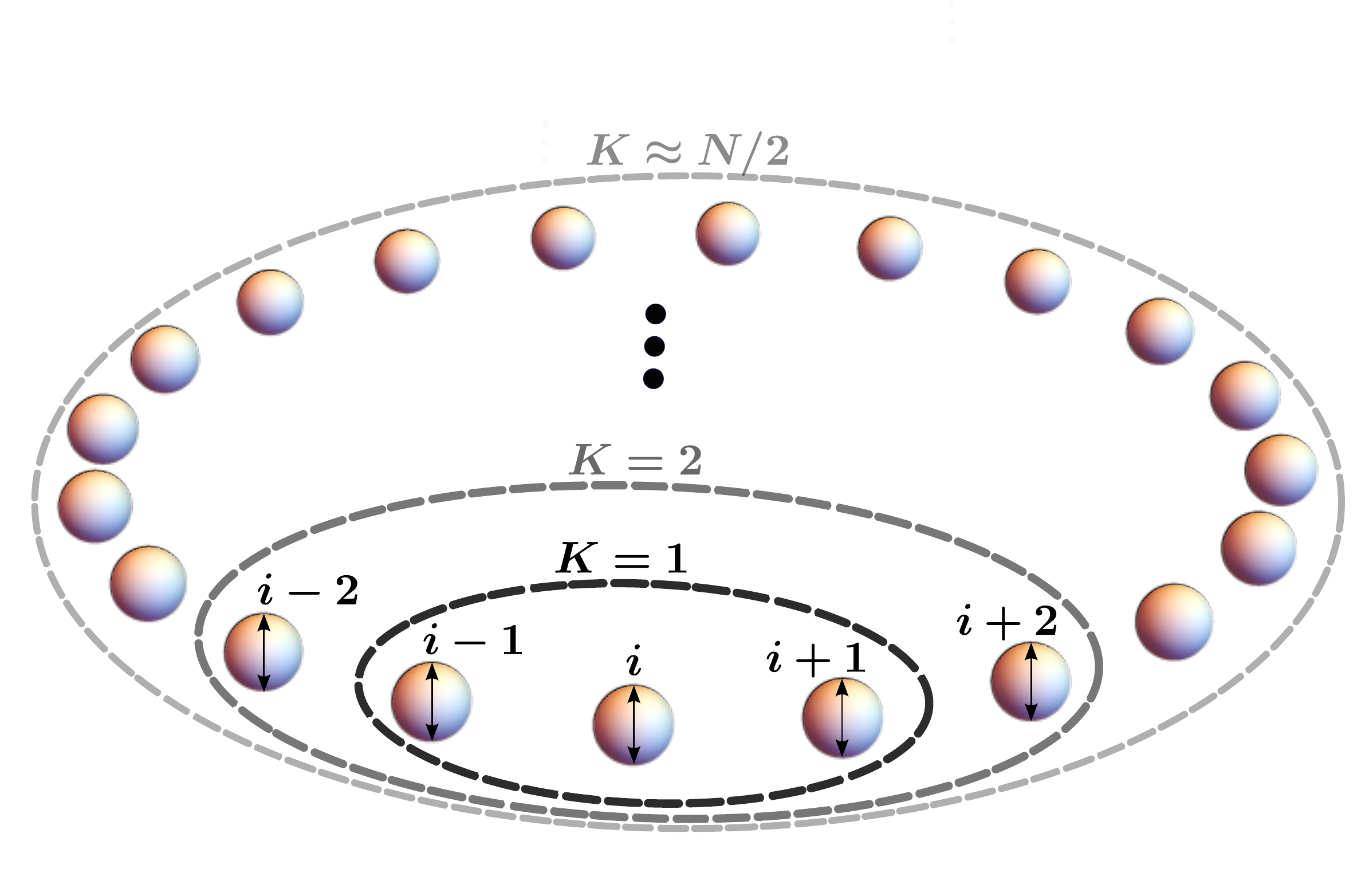}
    \caption{Finite range interactions (parameterized by $K$) between individually-trapped atoms in a periodic 1-D array. We show the optimality of the twist-untwist protocols first for full-range interactions ($K\approx N/2$, Section \ref{sec:ao}), then for finite range interactions ($K<N/2$, Section \ref{sec:fr}). In the case of a translationally-invariant Bose-symmetrized system (Section \ref{sec:gen}), we find that finite range interactions rescale the interaction strength of an all-to-all Hamiltonian that maps onto a global spin-squeezing operator.
}
    \label{fig:schem}
\end{figure*}
A major advantage of the probe state (\ref{eqn:iii}) compared to, e.g., the one-axis twisting probe state $e^{-i\chi t J_{z}^{2}}\ket{+}^{\otimes N}$, is that large values of the effective interaction time $\chi t$ are not required in order to achieve Heisenberg scaling for the method-of-moments estimation of the phase $\phi$  \cite{PhysRevLett.116.053601}. In particular, generation of the state that maximizes the QFI appearing in the QCRI for estimation of a $y$-rotation, viz., an equal amplitude Schr\"{o}dinger cat state of the minimal and maximal $J_{y}$ eigenvectors, is not required. Further, the untwist operation in (\ref{eqn:iii}) serves to realign the probe state so that a $J_{y}$ measurement gives Heisenberg scaling for all $N$. For comparison, note that, e.g., at $\chi t=\pi/2$, the alignment of the GHZ state produced by one-axis twisting dynamics depends on the parity of $N$. Therefore, the axis of the optimal total spin measurement depends on the parity of $N$. This parity is challenging to keep constant over a sequence of experimental runs. However, the question of the conditions under which protocol (\ref{eqn:iii}) is an optimal strategy for achieving Heisenberg scaling for estimation of $\phi$ remains open. 

In this work, we demonstrate the optimality of probe states of the form (\ref{eqn:iii}) in several settings relevant to quantum sensing with atomic systems. Unlike protocols based on noise-robust interaction-based readouts utilizing spin-resolved measurement statistics \cite{PhysRevA.97.053618,PhysRevA.98.030303,PhysRevLett.119.193601} or Loschmidt echo sequences that utilize information about the parameterized fidelity of the input and output states \cite{PhysRevA.94.010102}, we do not seek an optimal strategy for saturating a QCRI. Rather, our main results in Theorems \ref{th:1} and \ref{th:2} demonstrate the asymptotic ($N\rightarrow \infty$) optimality of the twist-untwist probe state (\ref{eqn:iii}) relative to a total spin measurement and its associated empirical error quantifier based on first and second moments  (see Eq.(\ref{eqn:prec})).  The fact that the ratio of the reciprocal QFI to the  asymptotic empirical error of the twist-untwist protocol is $\ge 0.85$ in all settings considered shows that the fundamental limit on the precision of an estimate of $\phi$ is lower than the empirical error only by a small multiplicative factor, and indicates the utility of this simple strategy for near-optimal quantum estimation.

In more detail, our main results show that the probe state (\ref{eqn:iii}) in the limit $N\rightarrow \infty$ achieves the minimal error possible among all protocols that apply a weak one-axis twisting before and after the rotation parameter  (Section \ref{sec:ao}).  In Section \ref{sec:layer} and Section \ref{sec:fr}, respectively, we obtain analogous results for a multilayer improvement of the probe state (\ref{eqn:iii}), and its implementation in systems with uniform, finite range atom-atom interactions. Fig. \ref{fig:schem} illustrates a schematic implementation of the interactions we consider in a periodic 1-D lattice of atoms. In our analyses of optimality, the principal physical constraint is the requirement of an asymptotically vanishing interaction time $\chi t \rightarrow 0$ as $N\rightarrow \infty$. This constraint is motivated by the fidelity losses encountered when generating coherence and entanglement of a many-atom system by spin-squeezing over long times.  Generation of spin squeezing by one-axis twisting is achievable in a 2-D system of trapped ions with  $N=O(100)$ \cite{boll}, and spin squeezing of approximately 20 dB has been achieved in the clock states of $^{87}$Rb \cite{kase,RevModPhys.90.035005}. The twist-untwist protocol (\ref{eqn:iii}) and readout of a spin direction observable achieving an estimation precision below the standard quantum limit has been implemented in a system of $N=O(100)$ $^{171}$Yb atoms \cite{vuletic}.
 
 \section{Background}
The setting of the quantum metrology problem at hand consists of: 1. preparation of a spin-${N\over 2}$ coherent state of $N$ two-level atoms, 2. application of alternating twist-untwist interactions sandwiching calls to the parameter $\phi$, and 3. measurement of $J_{y}$.  We define the empirical error as \begin{equation}(\Delta \phi)^{2}:= {\text{Var}_{\ket{\psi_{\phi}}}J_{y}\over \left( \del_{\phi} \bra{\psi_{\phi}}J_{y}\ket{\psi_{\phi}}\right)^{2}}.\label{eqn:prec}\end{equation}
Mathematically, this quantity is the asymptotic error-per-shot when using $J_{y}$ for method-of-moments estimation of $\phi$ \cite{RevModPhys.90.035005,PhysRevLett.122.090503}. 
When applied to the parameterized state (\ref{eqn:iii}), the  empirical error (\ref{eqn:prec})  exhibits Heisenberg scaling  $O(N^{-2})$, similar to the quantum Cram\'{e}r-Rao bound for estimation of $\phi$ which coordinatizes the a quantum state manifold (\ref{eqn:iii}) (see Proposition \ref{prop:aaa} for a rigorous statement). Note that achieving Heisenberg scaling exactly equal to the QFI requires implementation of an optimal measurement which is not a total spin observable. Since the error (\ref{eqn:prec}) is invariant under $\chi \rightarrow -\chi$,  the probe state (\ref{eqn:iii}) can be defined with $\chi>0$ without loss of generality. As discussed in Section \ref{sec:ao}, and as shown previously \cite{PhysRevLett.116.053601}, Heisenberg scaling of (\ref{eqn:prec}) occurs even for a weak interaction time $\chi t$, e.g., $\chi t=O( N^{-1/2})$. A physical explanation of this fact is that the initial one-axis twisting drives the coherent state $\ket{+}^{\otimes N}$ toward a Schr\"{o}dinger cat state of  according to the Yurke-Stoler dynamics driven by the one-axis twisting. For small $\chi t$, this process actually creates a pseudo-cat state, which is more sensitive to rotation than the initial spin coherent state. The untwisting acts to amplify the signal due to the rotation (the denominator of (\ref{eqn:prec}) while keeping the variance constant. Note that the twist-untwist protocol does not return the initial spin coherent state to the manifold of spin coherent states.

The fact that the nonlinearity of an interaction can compensate for weak interaction strength to achieve Heisenberg scaling in quantum sensing can also be observed for continuous variable displacement sensing. The continuous variable analogue of the twist-untwist protocol is given by applications of the Kerr nonlinearity with opposite signs: $\ket{\psi_{\phi}}=e^{i\chi t (a^{*}a)^{2}}D(\phi)e^{-i\chi t (a^{*}a)^{2}}\ket{\alpha}$ where $D(\phi)=e^{-i\phi p}$ is a unitary displacement operator and $\ket{\alpha}$ is a Heisenberg-Weyl coherent state with $\text{Im}\, \alpha =0$. One finds that for a homodyne readout of the $p$-quadrature
\begin{align}
    (\Delta \phi)^{2}\big\vert_{\phi=0}&:= {\text{Var}_{\ket{\psi_{\phi}}}p\over ({d\over d\phi}\langle \psi_{\phi}\vert p\vert \psi_{\phi} \rangle )^{2}}\Big\vert_{\phi=0}\nonumber \\&= {1\over 2\alpha^{4}e^{-4\alpha^{2}\sin^{2}\chi t}\sin^{2}\left( \alpha^{2}\sin(2\chi t)+1\right)} \label{eqn:cvtw}
\end{align}
which, for large $\alpha$ has an approximate minimum at $\chi t={\pi -2\over 4(\alpha^{2} +1)}$, an interaction time at which (\ref{eqn:cvtw}) scales as the inverse square of the intensity $\alpha^{2}$.

\section{Asymptotic optimality of twist-untwist protocols\label{sec:ao}}

In this section, we analyze a class of interferometry protocols involving two calls to weak one-axis twisting dynamics. The system consists of $N$ bosonic atoms distributed among two orthogonal single-particle modes, e.g., two matter waves of a weakly interacting Bose gas with different momenta. The parameter $\phi$ to be sensed determines the transmissivity of the matter-wave beamsplitter. One can view the spin operators $J_{i}$ by their Schwinger boson realization. Equivalent schemes can be engineered in cavity-QED or Rydberg-dressed atom systems in which the orthogonal single particle modes are internal atomic states, and sensing the interferometer parameter is equivalent to Ramsey spectroscopy up to a $\pi/2$ rotation  \cite{PhysRevLett.116.053601}.

Our main result in Theorem \ref{th:1} shows that when using the first and second moments of the total $J_{y}$ spin to quantify the error as in (\ref{eqn:prec}), and for two calls to a weak one-axis twisting evolution separated by the spin rotation to be sensed, the twist-untwist protocol (\ref{eqn:iii}) gives an optimal probe state in the limit of large $N$. However, it is useful to first understand how well, given the twist-untwist protocol (\ref{eqn:iii}), the error (\ref{eqn:prec}) performs when compared to the ultimate precision obtainable by an optimal unbiased estimator in one-shot quantum estimation theory.

For this, we recall that Heisenberg scaling of an unbiased estimator of $\phi$ is possible when the QFI appearing in the QCRI scales as $O(N^{2})$. Therefore, we first provide a basic, but rigorous, statement that relates the optimal scaling of the method-of-moments error (\ref{eqn:prec}) for twist-untwist protocols to $O(N^{2})$ scaling of the QFI of the twist-untwist protocol (\ref{eqn:iii}).

\begin{proposition}
The minimum of (\ref{eqn:prec}) with respect to twist-untwist protocol (\ref{eqn:iii}) occurs at $\chi t = \tan^{-1}{1\over \sqrt{N-2}}$ with minimum value asymptotically given by ${e\over N^{2}}$ as $N\rightarrow \infty$. With $\mathrm{QFI}(\psi_{\phi})$ defined as the quantum Fisher information for the protocol (\ref{eqn:iii}), $\mathrm{QFI}(\psi_{\phi})\sim \left( {e^{2}-1\over 2e^{2}}\right)N^{2}$ at $\chi t =\tan^{-1}{1\over \sqrt{N-2}}$ and the function
\begin{equation}
    f(\chi t):= {\mathrm{QFI}(\psi_{\phi})^{-1}\over (\Delta \phi)^{2}\vert_{\phi=0}}
\end{equation}
satisfies
$f(\tan^{-1}{1\over \sqrt{N-2}})\sim {2\over e-e^{-1}}$. Further, $f(\chi t)\le 1$ and the maximum value 1 is asymptotically attained when $\chi t$ is $O({1\over N})$.
\label{prop:aaa}
\end{proposition}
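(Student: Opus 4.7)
My plan is to first reduce the optimization to computing the signal $\del_\phi\langle J_y\rangle|_{\phi=0}$. At $\phi=0$ the twist and untwist in (\ref{eqn:iii}) cancel, so $\ket{\psi_0}=\ket{+}^{\otimes N}$, and hence $\Var_{\psi_0}J_y=N/4$ and $\langle J_y\rangle_{\psi_0}=0$. Then $(\Delta\phi)^2|_{\phi=0}=(N/4)/(\del_\phi\langle J_y\rangle|_{\phi=0})^2$, and first-order perturbation theory gives $\del_\phi\langle J_y\rangle|_{\phi=0}=i\langle +|^{\otimes N}[\bar J_y, J_y]|+\rangle^{\otimes N}$ with $\bar J_y:=e^{i\chi t J_z^2}J_y e^{-i\chi t J_z^2}$.

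To compute $\bar J_y$ in closed form, I invoke the Ising decomposition $J_z^2=N/4+(1/2)\sum_{m<n}\sigma_z^{(m)}\sigma_z^{(n)}$ and iterate the elementary identity $e^{i(\chi t/2)\sigma_z^{(m)}\sigma_z^{(n)}}\sigma_y^{(n)}e^{-i(\chi t/2)\sigma_z^{(m)}\sigma_z^{(n)}}=\cos(\chi t)\sigma_y^{(n)}+\sin(\chi t)\sigma_z^{(m)}\sigma_x^{(n)}$ over the mutually commuting pair factors, obtaining
\begin{equation}
\bar\sigma_y^{(n)}=\cos(\chi t\,S_n)\sigma_y^{(n)}+\sin(\chi t\,S_n)\sigma_x^{(n)},
\end{equation}
where $S_n:=\sum_{m\neq n}\sigma_z^{(m)}$. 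Expectation values in $\ket{+}^{\otimes N}$ then factorize through the single-qubit averages $\langle+|\sigma_x|+\rangle=1$, $\langle+|\sigma_y|+\rangle=\langle+|\sigma_z|+\rangle=0$, and $\langle+|e^{i\chi t\sigma_z}|+\rangle=\cos\chi t$. Case analysis of $[\bar J_y,J_y]=(1/4)\sum_{n,n'}[\bar\sigma_y^{(n)},\sigma_y^{(n')}]$ on $n=n'$ versus $n\neq n'$ (diagonal commutators carry $\sigma_z^{(n)}$ as a factor and vanish in expectation; off-diagonal commutators produce $\sigma_x^{(n)}\sigma_x^{(n')}$ via $[\sigma_z^{(n')},\sigma_y^{(n')}]=2i\sigma_x^{(n')}$) yields
\begin{equation}
\del_\phi\langle J_y\rangle|_{\phi=0}=\frac{N(N-1)}{2}\sin(\chi t)\cos^{N-2}(\chi t).
\end{equation}
This multi-qubit Heisenberg-picture algebra is the main technical obstacle; the rest is calculus and asymptotic analysis.

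Minimizing $(\Delta\phi)^2=[N(N-1)^2\sin^2(\chi t)\cos^{2(N-2)}(\chi t)]^{-1}$ in $\chi t$ gives the critical-point equation $\tan^2(\chi t)=1/(N-2)$, at which $\sin^2(\chi t^*)=1/(N-1)$ and $\cos^2(\chi t^*)=(N-2)/(N-1)$, so $(\Delta\phi)^2|_{\min}=(N-1)^{N-3}/[N(N-2)^{N-2}]$, whose limit via $(1+1/(N-2))^{N-2}\to e$ is $e/N^2$. For the QFI, $\phi$-independent unitary invariance reduces it to $4\Var_{\ket{\xi}}J_y$ with $\ket{\xi}:=e^{-i\chi t J_z^2}\ket{+}^{\otimes N}$; the standard Kitagawa--Ueda second-moment formula gives $\mathrm{QFI}=N(N+1)/2-(N(N-1)/2)\cos^{N-2}(2\chi t)$, and at $\chi t^*$ one has $\cos(2\chi t^*)=1-2/(N-1)$ together with $(1-2/(N-1))^{N-2}\to e^{-2}$, whence $\mathrm{QFI}\sim N^2(e^2-1)/(2e^2)$ and $f(\chi t^*)\sim 2/(e-e^{-1})$.

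Finally, $f\leq 1$ is the Cram\'{e}r--Rao inequality applied to the method-of-moments estimator of $\phi$. For the asymptotic saturation, Taylor-expansion for small $\chi t$ gives $(\Delta\phi)^{-2}\sim N(N-1)^2(\chi t)^2\sim N^3(\chi t)^2$ (using $\sin^2\sim(\chi t)^2$ and $\cos^{2(N-2)}\to 1$), while the rotation-sensitive part $\mathrm{QFI}-N\sim N(N-1)(N-2)(\chi t)^2\sim N^3(\chi t)^2$ (using $1-\cos^{N-2}(2\chi t)\sim 2(N-2)(\chi t)^2$); once this $O(N^3(\chi t)^2)$ term dominates the SQL remainder $O(N)$, the two quantities coincide at leading order and $f\to 1$.
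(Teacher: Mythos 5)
Your proposal is correct on all four claims, and it takes a more self-contained route than the paper's. Where the paper imports the critical time from \cite{PhysRevLett.116.053601} and works from the precomputed closed formula (\ref{eqn:anal2}) for the general two-twist protocol (whose specialization $a_{2}=-a_{1}=\chi t$ gives exactly your signal ${N(N-1)\over 2}\sin(\chi t)\cos^{N-2}(\chi t)$ and your variance $N/4$), you derive the signal from scratch via the Ising decomposition of $J_{z}^{2}$ and the pairwise conjugation identity; that Heisenberg-picture computation is right, up to the immaterial sign typo $[\sigma_{z},\sigma_{y}]=-2i\sigma_{x}$ (only the square enters), and your observation that the twist and untwist cancel at $\phi=0$ so that $\mathrm{Var}\,J_{y}=N/4$ is precisely why the optimization reduces to maximizing the signal. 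The QFI reduction to $4\,\mathrm{Var}_{e^{-i\chi t J_{z}^{2}}\ket{+}^{\otimes N}}J_{y}$, the evaluation at $\chi t^{*}$, and the $f\le 1$ argument (error propagation bounded by classical Fisher information, which is bounded by the QFI) coincide with the paper's. On the final claim your analysis is actually sharper than the paper's: expanding $\mathrm{QFI}=N+N(N-1)(N-2)(\chi t)^{2}+\cdots$ and $(\Delta\phi)^{-2}=N(N-1)^{2}(\chi t)^{2}(1+o(1))$ and demanding that the $O(N^{3}(\chi t)^{2})$ term dominate the $O(N)$ remainder correctly locates the saturation regime at $1/N\ll\chi t\ll N^{-1/2}$. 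Note that at $\chi t=c/N$ exactly one finds $f\to c^{2}/(1+c^{2})<1$, because $N^{2}\bigl(1-\cos^{N-2}(2c/N)\bigr)=2c^{2}N+O(1)$ is of the same order as $N\bigl(1+\cos^{N-2}(2c/N)\bigr)$; the paper's plug-in of $\chi t=1/N$ neglects this and would in fact yield $1/2$, so the clause ``attained when $\chi t$ is $O(1/N)$'' is best read as $\chi t=\omega(1/N)\cap o(N^{-1/2})$ --- which is exactly the condition your proof isolates.
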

\begin{proof}
The value $\tan^{-1}{1\over \sqrt{N-2}}$ for the critical interaction time is proven in Ref. \cite{PhysRevLett.116.053601}. Note that the numerator of $f$ is the lower bound appearing in the quantum Cram\'{e}r-Rao inequality (QCRI), so $f=1$ implies that the measurement saturating the QCRI for the protocol $\ket{\psi_{\phi}}$ has the same error as the $J_{y}$ measurement defining (\ref{eqn:prec}). The fact that $f\le 1$ follows from the fact that $((\Delta \phi)^{2}\vert_{\phi =0})^{-1}$ is at most the classical Fisher information with respect to the $J_{y}$ measurement at $\ket{\psi_{\phi=0}}$ \cite{ps}, and the existence of a measurement for which the classical Fisher information saturates the QFI \cite{bc}.  From the symmetric logarithmic derivative for the state manifold $\ket{\psi_{\phi}}$ (see (\ref{eqn:sld})), it follows that \begin{align}\text{QFI}(\psi_{\phi})&=4\text{Var}_{e^{\pm i\chi t J_{z}^{2}}\ket{+}^{\otimes N}}J_{y} \nonumber \\
&= {1\over 2}\left( N^{2}+N - N(N-1)\cos^{N-2}2\chi t \right).
\label{eqn:qfiqfi}
\end{align}
Note that $\text{QFI}(\psi_{\phi})$ is independent of $\phi$. At $\chi t=\tan^{-1}{1\over \sqrt{N-2}}$ one finds that
\begin{align}
&{}={1\over 2}\left( N^{2}+N - N(N-1)\cos^{N-2}\left( 2\tan^{-1}{1\over \sqrt{N-2}}\right) \right)\nonumber \\
&=  {1\over 2}\left( N^{2}+N - N(N-1)\left( {2\over 1+{1\over N-2}}-1\right)^{N-2}\right)\nonumber \\
&\sim {1\over 2}\left( N^{2}+N - N(N-1)\cdot {1\over e^{2}}\right)\nonumber \\
&\sim {e^{2}-1\over 2e^{2}}N^{2}.
\end{align} 
A closed formula for the empirical error (\ref{eqn:prec}) evaluated for the more general protocol of the form
\begin{equation}
    \ket{\psi_{\phi}}=e^{ia_{2}J_{z}^{2}}e^{-i\phi J_{y}}e^{ia_{1}J_{z}^{2}}  \ket{+}^{\otimes N}
    \label{eqn:twoparam}
\end{equation} is given by:
\begin{small}
\begin{align}
    &{}(\Delta \phi)^{2}\big\vert_{\phi =0}\nonumber \\
    &= {2(N+1)-2(N-1)\cos^{N-2}(2(a_{1}+a_{2})) \over N(N-1)^{2}\sin^{2}a_{2}\left( \cos^{N-2}a_{2} + \cos^{N-2}(2a_{1}+a_{2}) \right)^{2}}.
    \label{eqn:anal2}
\end{align}
\end{small}
The function $f(\chi t)$ defined with respect to twist-untwist protocol (\ref{eqn:iii}) is calculated from (\ref{eqn:anal2}) and (\ref{eqn:qfiqfi}), giving the final result
\begin{equation}
    f(\chi t)={2N(N-1)^{2}\sin^{2}\chi t\cos^{2N-4}\chi t\over N^{2}\left( 1-\cos^{N-2}2\chi t \right) + N\left( 1+\cos^{N-2}2\chi t \right)}.
    \label{eqn:gfgf}
\end{equation}
Using $\tan^{-1}{1\over \sqrt{N-2}}\sim {1\over \sqrt{N}}$, $\cos^{N}{2\over \sqrt{N}}\sim e^{-2}$, $\cos^{2N-4}{1\over \sqrt{N}}\sim e^{-1}$, and $\sin^{2}{1\over \sqrt{N-2}}\sim {1\over N}$ gives the asymptotic result $f(\tan^{-1}{1\over \sqrt{N-2}})\sim {2\over e-e^{-1}}$, which implies that at the  interaction time that minimizes (\ref{eqn:prec}), the lowest possible achievable error as expressed by the QCRI associated with (\ref{eqn:qfiqfi}) is only about 15\% lower than (\ref{eqn:prec}) in the limit $N\rightarrow \infty$. Finally, we show the existence of an interaction time scaling for which $\lim_{N\rightarrow \infty}f(\chi t)=1$. Simply consider $\chi t= {1\over N}$ and use the asymptotics $\cos^{N}{1\over N}\sim \cos^{2N}{1\over N}\sim 1$ and $\sin^{2}{1\over N} \sim {1\over N^{2}}$ to get $\lim_{N\rightarrow \infty}f({1\over N})=1$. However, for $\chi t =O( {1\over N})$, the QFI is $O(N)$, so a twist untwist protocol for such short interaction times is not useful for sensing $\phi$ below the standard quantum limit.
\end{proof}

The function $f$ is plotted in Fig.\ref{fig:ooo2} for $N=10^{3}$. For $\chi t > \tan^{-1}{1\over \sqrt{N-2}}$, the one-axis twisting probe state $e^{\pm i \chi t J_{z}^{2}}\ket{+}^{\otimes N}$ has QFI scaling as $O(N^{2})$, which cannot be obtained with the twist-untwist protocol (\ref{eqn:iii}). However, for $\chi t \le \tan^{-1}{1\over \sqrt{N-2}}$, the twist-untwist protocol with error (\ref{eqn:prec}) scales similarly to the optimal error achievable with the one-axis twisting probe, with both quantities scaling as $O(N^{-2})$ when $\chi t \approx \tan^{-1}{1\over \sqrt{N-2}}$.

Although Proposition \ref{prop:aaa} suggests how to interpret the optimal $N^{-2}$ scaling of (\ref{eqn:prec}) for the twist-untwist protocol  (\ref{eqn:iii}), it remains unclear whether similar protocols involving one-axis twisting before and after the rotation would be able to achieve the same scaling. Therefore, we now consider protocols of the form (\ref{eqn:twoparam})
with $a_{j}\in \mathbb{R}$. Numerical optimization of the empirical error (\ref{eqn:prec}) and effects of dephasing noise for such protocols were considered in \cite{Schulte2020ramsey}. Note that calculation of the phase estimate or the empirical error (\ref{eqn:prec}) do not require access to the full probability distribution obtained from measurement of an observable. Different phase estimation schemes can lead to different assessments of the optimality of protocols within the class (\ref{eqn:twoparam}). For example, Loschmidt echo protocols for estimation of $\phi$ in (\ref{eqn:twoparam}), which require a measurement that allows to estimate the probability $\vert \left( \ket{ \psi_{\phi}}, \ket{+}^{\otimes N}\right)\vert^{2}$, are known to saturate the QCRI for $a_{2}=-a_{1}$ \cite{PhysRevA.94.010102}. In contrast, numerical evidence suggests that protocol (\ref{eqn:twoparam}) with $\vert a_{2}\vert \neq \vert a_{1}\vert$ can allow a greater classical Fisher information than the case of  $a_{2}=-a_{1}$ when employing noisy spin-resolving measurements   \cite{PhysRevLett.119.193601}.

Note that the formula (\ref{eqn:anal2}) demands that we refine the parameter space of (\ref{eqn:twoparam}) so as to have a well-defined sensing protocol. In particular, we restrict to $a_{2}\in (-\pi/2,0)\cup (0,\pi/2)$ and $a_{1}\in (-\pi/2,0)$ without loss of generality. In experimental implementations of (\ref{eqn:twoparam}), the range of available interaction times $a_{j}$ will depend on $N$, due to decoherence.

We now aim to show that when the interaction times $a_{1}$ and $ a_{2}$ are $N$-dependent functions that go to zero as $N\rightarrow \infty$, (\ref{eqn:anal2}) is asymptotically minimized for $a_{2}=-a_{1}$, i.e., at a point at which (\ref{eqn:twoparam}) defines a twist-untwist protocol. The key observation is that for fixed $a_{1}$, (\ref{eqn:anal2}) has an asymptotic extremum at $a_{2}=-a_{1}$. This is shown in Theorem \ref{th:1} below.

\begin{theorem}
Let $(\Delta \phi)^{2}\big\vert_{\phi=0}$ be defined with respect to $\ket{\psi_{\phi}}$ as in (\ref{eqn:twoparam}) and let $a_{j}={c_{j}\over N^{\alpha}}$ , $\alpha \ge 1/2$, for nonzero constants $c_{j}$, $j=1,2$. Then, in the limit $N\rightarrow \infty$ the unique minimum of $N^{2}(\Delta \phi)^{2}\big\vert_{\phi=0}$ occurs at $c_{2}=-c_{1}=1$.
\label{th:1}
\end{theorem}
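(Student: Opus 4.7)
The plan is to substitute $a_j = c_j/N^\alpha$ into formula~(\ref{eqn:anal2}) and extract the leading asymptotics of $N^2(\Delta\phi)^2\vert_{\phi=0}$ as $N\to\infty$, using the elementary expansions $\sin^2(c/N^\alpha)\sim c^2/N^{2\alpha}$ and $\cos^{N-2}(x/N^\alpha) \to \exp(-x^2 N^{1-2\alpha}/2)$.

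I would focus first on the critical case $\alpha = 1/2$, where the limit function is nontrivial. Direct substitution shows the numerator of (\ref{eqn:anal2}) satisfies $2(N+1)-2(N-1)\cos^{N-2}(2(a_1+a_2)) \sim 2N(1-e^{-2(c_1+c_2)^2})$, while the denominator satisfies $N(N-1)^2\sin^2 a_2(\cos^{N-2}a_2 + \cos^{N-2}(2a_1+a_2))^2 \sim N^2 c_2^2(e^{-c_2^2/2}+e^{-(2c_1+c_2)^2/2})^2$. Therefore $N^2(\Delta\phi)^2$ diverges like $O(N)$ whenever $c_1+c_2 \neq 0$, so any finite limit minimizer must lie on the line $c_2 = -c_1$.

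Along this line the numerator collapses to the exact value $4$ (since $\cos^{N-2}(0)=1$), and the denominator reduces to $4N^2 c_2^2 e^{-c_2^2}(1+o(1))$. The resulting limit function is $g(c_2) := e^{c_2^2}/c_2^2$; by elementary calculus ($\tfrac{d}{dx}(e^x/x) = e^x(x-1)/x^2$) it has a unique minimum at $c_2^2 = 1$, with value $e$. The domain restriction $a_1 \in (-\pi/2, 0)$, which forces $c_1 < 0$, then singles out the unique minimizer $c_1 = -1$, $c_2 = 1$.

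The main obstacle is handling $\alpha > 1/2$, where every cosine factor tends to $1$ and $N^2(\Delta\phi)^2$ diverges pointwise everywhere. I would address this by expanding one further order, $\cos^{N-2}(x/N^\alpha) = 1 - (N-2)x^2/(2N^{2\alpha}) + O(N^{2-4\alpha})$, obtaining $N^2(\Delta\phi)^2 \sim N^{2\alpha-1}/c_2^2 + (c_1+c_2)^2 N/c_2^2$; the slowest divergence rate across $\alpha \ge 1/2$ occurs only at $\alpha = 1/2$ with $c_1+c_2 = 0$, confirming the unique global minimizer $(c_1,c_2) = (-1,1)$. Justifying the interchange of the minimum with the $N\to\infty$ limit, and carefully controlling the error terms in the cosine expansion uniformly in the parameter ranges, is the most delicate part of the argument.
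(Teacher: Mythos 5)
Your proposal is correct and follows essentially the same route as the paper's proof: substitute $a_j=c_j/N^{\alpha}$ into (\ref{eqn:anal2}), show divergence of $N^2(\Delta\phi)^2\vert_{\phi=0}$ both for $\alpha>1/2$ and for $\alpha=1/2$ with $c_1+c_2\neq 0$ (numerator $O(N^2)$ versus denominator $O(N)$ after multiplying by $N^2$), and then minimize the limit function $e^{c_2^2}/c_2^2$ on the line $c_2=-c_1$, using the sign restriction on $a_1$ to fix $c_1=-1$, $c_2=1$. Your treatment of the $\alpha>1/2$ case via the second-order cosine expansion, giving $N^{2\alpha-1}/c_2^2+(c_1+c_2)^2N/c_2^2$, is in fact slightly more careful than the paper's leading-order statement, but it reaches the same conclusion.
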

\begin{proof}
 Call $f(a_{1},a_{2}) :=N^{2}(\Delta \phi)^{2}\big\vert_{\phi=0}$ and note that we restrict to $a_{1}<0$. The factor of $N^{2}$ in the definition of $f$ is so that the $N\rightarrow \infty$ limit of $\nabla f$ is not zero pointwise; without this factor, $f$ is asymptotically zero. For example, with $c:=\tan^{-1}{1\over\sqrt{N-2}}$, it follows that $\lim_{N\rightarrow \infty}f(-c,c)=e$. Consider first $\alpha>1/2$. From (\ref{eqn:anal2}) and the fact that $\cos^{N-2}{x\over N^{\alpha}}\sim 1$ for $\alpha >1/2$, it follows that $f(a_{1},a_{2})\sim {N^{2\alpha -1}\over c_{2}^{2}}$. Therefore, for $\alpha >1/2$, $f$ is arbitrarily large as $N$ increases. For $\alpha=1/2$, note that if $c_{2}\neq -c_{1}$, then the numerator of $f$ scales as $N^{2}$ for large $N$ whereas the denominator scales as $N$ for large $N$. Therefore, if $\alpha = 1/2$ and $c_{2}\neq -c_{1}$, then $f$ is arbitrarily large as $N$ increases. However, if $\alpha=1/2$ and $c_{2}= -c_{1}$, then $f\sim {1\over c_{2}^{2}e^{-c_{2}^{2}}}$, which does not scale with $N$. The asymptotic is minimized when $c_{2}=1$. 
\end{proof}
A straightforward corollary of Theorem \ref{th:1} is that the twist-untwist protocol (\ref{eqn:iii}) with $\chi t={1\over \sqrt{N}}$ is asymptotically optimal among protocols of the form (\ref{eqn:twoparam}) restricted to weak nonlinearities $a_{j}={c_{j}\over N^{\alpha}}$, $\alpha \le 1/2$. One can see this from Theorem \ref{th:1} by noting that in the cases: 1. $\alpha >1/2$ and 2. $\alpha=1/2$ and $c_{2}\neq -c_{1}$, the empirical error satisfies $(\Delta \phi)^{-2}\big\vert_{\phi=0} = o(N^{2})$. In the remaning case, viz., $\alpha=1/2$ and $c_{2}= -c_{1}$, $(\Delta \phi)^{-2}\big\vert_{\phi=0} = O(N^{2})$. Therefore, under the weak nonlinearity constraint expressed in Theorem \ref{th:1}, the asymptotically optimal parameters of (\ref{eqn:twoparam}) exhibit $N^{-1/2}$ decay. This decay is modified when the interaction has finite range. We discuss this further in Section \ref{sec:fr}.

\begin{figure*}[t!]
\includegraphics[scale=0.6]{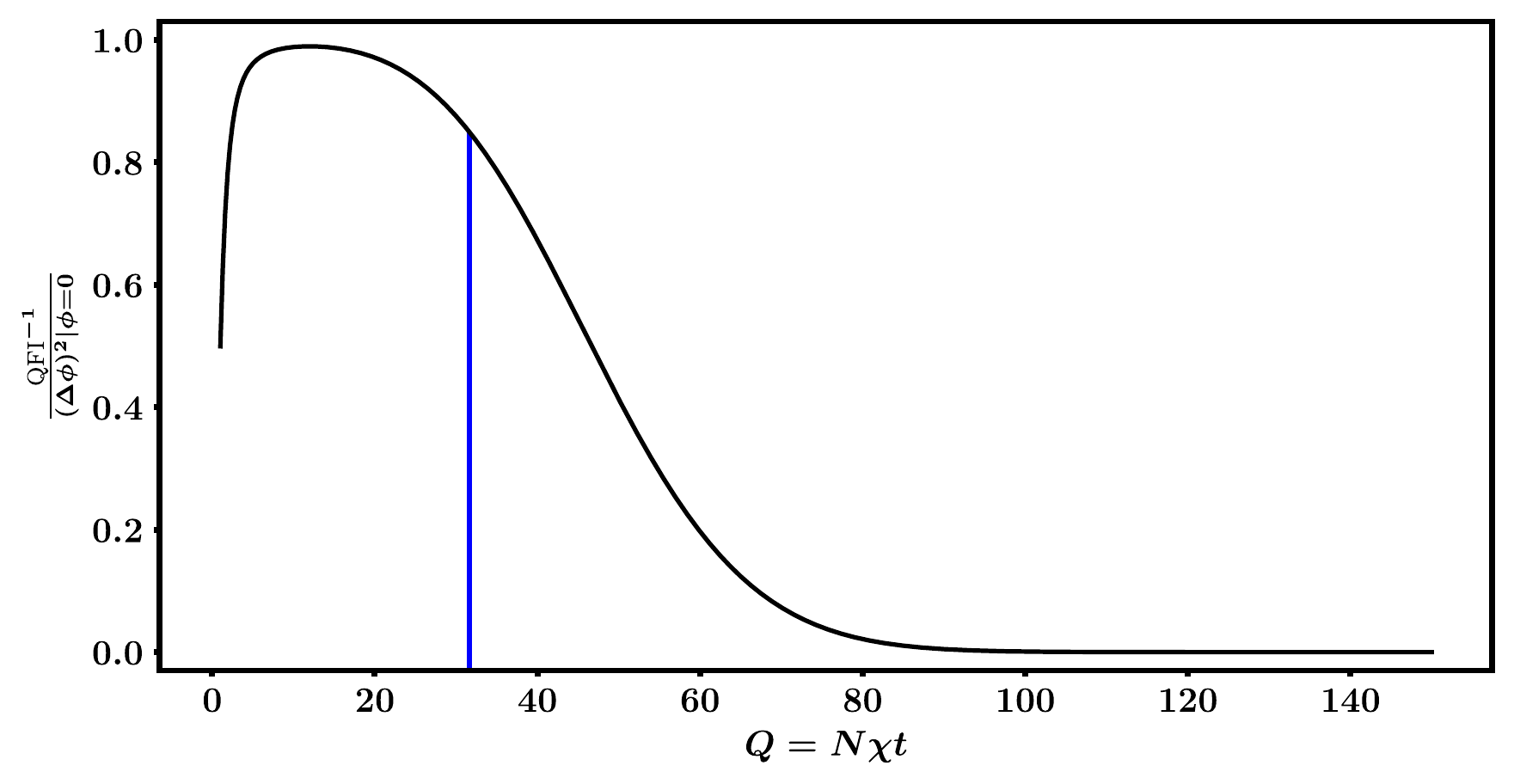}
    \caption{The function $f$ in (\ref{eqn:gfgf}) for $N=10^{3}$. The blue line is at $\tan^{-1}{1\over \sqrt{N-2}}$, the critical point for the minimum of (\ref{eqn:prec}) for the protocol (\ref{eqn:iii}).}
    \label{fig:ooo2}
\end{figure*}

We note that applying a counter-rotation at the end of a twist-untwist protocol, i.e., forming the protocol $\ket{\zeta_{\phi}}:=e^{i\phi J_{y}}e^{i\chi t J_{z}^{2}}e^{-i\phi J_{y}}e^{-i\chi t J_{z}^{2}}\vert +\rangle^{\otimes N}$, does not change the value of  (\ref{eqn:prec}). However, the QFI is no longer given by (\ref{eqn:qfiqfi}) and is rather given by
\begin{align}
\text{QFI}(\zeta_{\phi})\vert_{\phi=0}&:= \Vert \mathcal{L}_{\phi=0}\ket{+}^{\otimes N}\Vert^{2} \nonumber \\
&=  \text{QFI}(\psi_{\phi}) + N-2N\cos^{N-1}\chi t
\end{align}
where 
\begin{equation}
\mathcal{L}_{\phi=0}=\left( 2iJ_{y} -2ie^{i\chi t J_{z}^{2}}J_{y}e^{-i\chi t J_{z}^{2}} \right)\ket{+}\bra{+}^{\otimes N} + h.c.
\end{equation}
is the symmetric logarithmic derivative. This does not change the $N\rightarrow \infty$ asymptotic result for $f(\chi t)$ in Proposition \ref{prop:aaa} at the optimal interaction time $\chi t = \tan^{-1}{1\over \sqrt{N-2}}$.

\section{Layered twist-untwist protocols\label{sec:layer}}

An $L$ layer twist-untwist protocol can be defined by the parameterized state
\begin{equation}
\ket{\psi_{\phi}^{(L)}}=e^{i\phi J_{y}} \left( e^{-i\phi J_{y}} e^{i\chi t J_{z}^{2}}e^{-i\phi J_{y}}e^{-i\chi t J_{z}^{2}}  \right)^{L}\ket{+}^{\otimes N}.
\label{eqn:lla}
\end{equation}
 The state (\ref{eqn:lla}) is motivated by the consideration of a twist-untwist layer $e^{i\chi t C}e^{-i\phi J_{y}}e^{-i\chi t C}$ as a module for interferometry. Similarly structured circuits appear in asymptotically optimal variational quantum algorithms for quantum unstructured search \cite{rieffel}.  It is clear that for $L$ layers, the denominator of (\ref{eqn:prec}) can be calculated using
\begin{align}
 \del_{\phi}\langle J_{y}\rangle_{\ket{\psi_{\phi}^{(L)}}} \Bigg\vert_{\phi=0} 
= L\del_{\phi}\langle J_{y}\rangle_{\ket{\psi_{\phi}^{(1)}}}\Bigg\vert_{\phi=0} .
\label{eqn:ggg}
\end{align}

\begin{figure*}[t!]
    \begin{center}
    \includegraphics[scale=0.6]{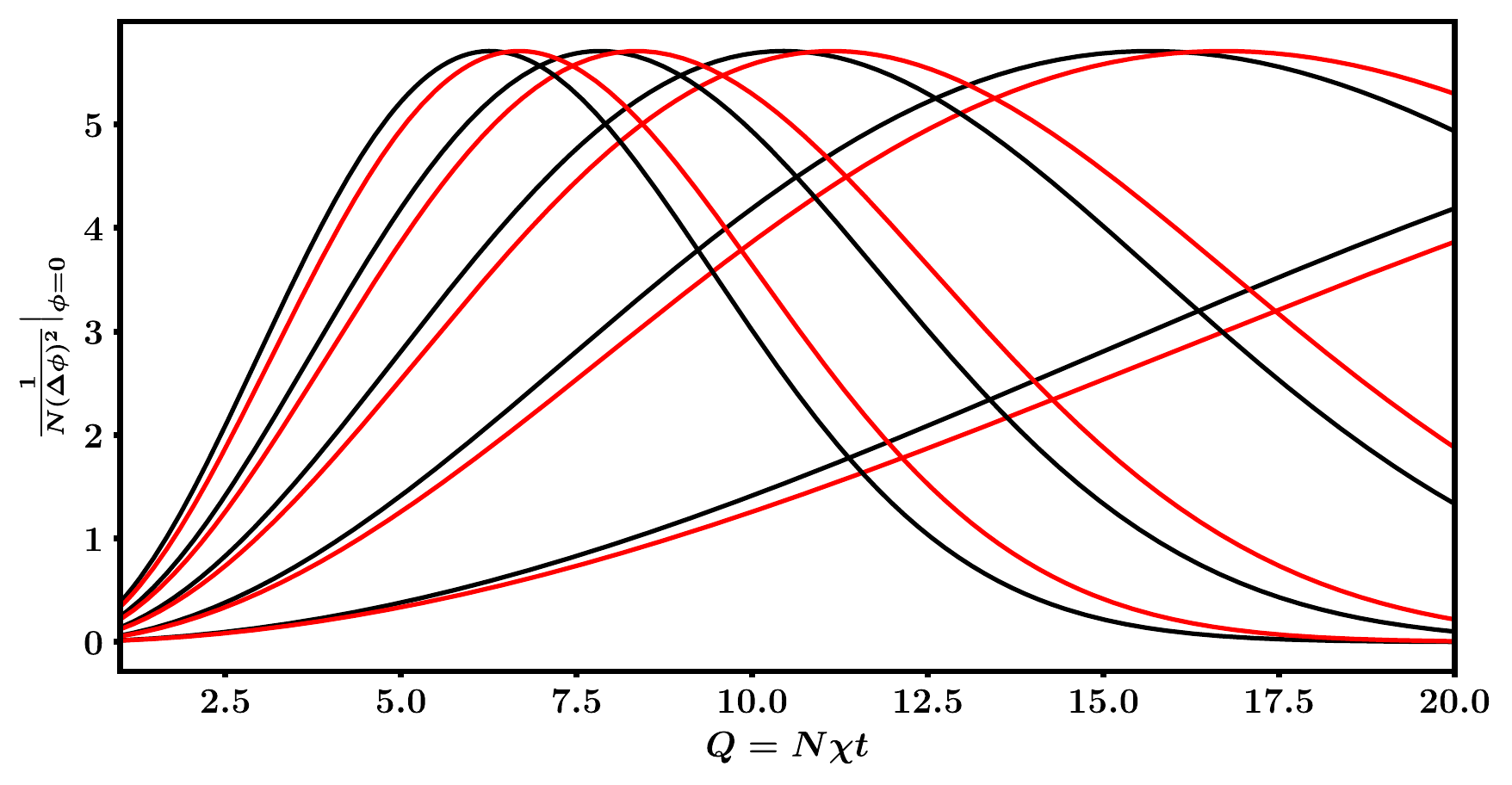}
    \caption{Black curves: inverse normalized empirical error for estimation of $\phi$ for the one-axis twist-untwist protocol in the spin wave subspace (\ref{eqn:protf}) with $N=16$ and $K=1,\ldots,5$. Red curves: the same except with $\tilde{H}_{K}$ replaced by $\tilde{H}_{K}^{(2)}$ in (\ref{eqn:protf}). For both cases, larger $Q_{\text{max}}$ corresponds to smaller $K$. \label{fig:ooo}}
    \end{center}
\end{figure*}

The numerator of (\ref{eqn:prec}) at $\phi=0$ is invariant with respect to the number of layers $L$. Because a probe state consisting of $L$ independent copies of (\ref{eqn:iii}) would have variance of the total $y$-spin component increased by a factor of $L$, we conclude that an $L$ layer protocol (\ref{eqn:lla}) allows the value of $(\Delta\phi)^{2}$ to be decreased by a factor of $L^{-1}$ compared to the protocol consisting of $L$ independent copies of (\ref{eqn:iii}).   As an alternative to the layered protocol (\ref{eqn:lla}) one may consider the parameterized state $\ket{\psi_{\phi}}=e^{i\chi t C}e^{-iL\phi J_{y}}e^{-i\chi t C}\ket{+}^{\otimes N}$, which allows to obtain an $L^{2}$ increase in the derivative of the signal, similarly as in (\ref{eqn:ggg}). The difficulty with this proposal is that the map $e^{-i\theta J_{y}}\ket{\psi} \mapsto e^{-iL\theta J_{y}}\ket{\psi}$ for all $\theta$ and $\ket{\psi}$ cannot be carried out unitarily. A proof of this fact can be obtained which is similar to proofs of ``no-go'' theorems for noiseless parametric amplification in the continuous-variable setting (i.e., that the map $\ket{\alpha}\mapsto \ket{L\alpha}$ cannot be achieved by a unitary operation).

In fact, the multiplicative improvement obtained from the $L$ layer twist-untwist protocol extends to the QFI.

\begin{proposition}
The quantum Fisher information $\mathrm{QFI}(\psi^{(L)}_{\phi})$ at $\phi=0$ for $L$ layer twist-untwist protocol (\ref{eqn:lla}) is given by
\begin{equation}
L^{2}\mathrm{QFI}(\psi^{(1)}_{\phi}) +2L(L-1)N\cos^{N-1}\chi t + (L-1)^{2}N.
\end{equation}
\label{prop:ttt}
\end{proposition}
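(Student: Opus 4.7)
The strategy is to compute $\mathrm{QFI}(\psi^{(L)}_\phi)$ at $\phi=0$ via an effective generator: since $\ket{\psi^{(L)}_{\phi=0}}=\ket{+}^{\otimes N}$ (each bracketed factor reduces to the identity at $\phi=0$), the QFI equals $4\Var_{\ket{+}^{\otimes N}}(G_{\text{eff}})$, where the Hermitian $G_{\text{eff}}$ is defined by $\partial_{\phi}\ket{\psi^{(L)}_{\phi}}\vert_{\phi=0}=-iG_{\text{eff}}\ket{+}^{\otimes N}$. Applying the product rule and noting that every factor except the one being differentiated evaluates to the identity at $\phi=0$, one finds (with $W:=e^{i\chi t J_{z}^{2}}$) that $\partial_{\phi}(e^{-i\phi J_{y}}W e^{-i\phi J_{y}}W^{\dagger})\vert_{\phi=0}=-i(J_{y}+WJ_{y}W^{\dagger})$; multiplied by $L$ and combined with the outer $\partial_{\phi}e^{i\phi J_{y}}\vert_{\phi=0}=iJ_{y}$, this gives $G_{\text{eff}}=(L-1)J_{y}+LWJ_{y}W^{\dagger}$, which correctly reduces to $-J_{y}$ at $L=0$ (QFI $=N$) and $WJ_{y}W^{\dagger}$ at $L=1$ (QFI $=\mathrm{QFI}(\psi^{(1)}_{\phi})$).

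Expanding $\Var(G_{\text{eff}})$, two of the three terms in the target formula appear immediately: $4(L-1)^{2}\Var_{\ket{+}^{\otimes N}}(J_{y})=(L-1)^{2}N$ via $\Var(J_{y})=N/4$, and $4L^{2}\Var_{\ket{+}^{\otimes N}}(WJ_{y}W^{\dagger})=L^{2}\mathrm{QFI}(\psi^{(1)}_{\phi})$ by (\ref{eqn:qfiqfi}). Both first moments vanish in $\ket{+}^{\otimes N}$ by the parity $e^{i\pi J_{x}}$, which fixes $\ket{+}^{\otimes N}$ up to phase and $J_{z}^{2}$ pointwise while negating $J_{y}$, so $\langle J_{y}\rangle=\langle WJ_{y}W^{\dagger}\rangle=0$. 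The covariance therefore reduces to $\mathrm{Re}\langle J_{y}WJ_{y}W^{\dagger}\rangle$, and the whole problem amounts to showing this real part equals $\tfrac{N}{4}\cos^{N-1}\chi t$, so that the cross term $8L(L-1)\mathrm{Cov}(J_{y},WJ_{y}W^{\dagger})=2L(L-1)N\cos^{N-1}\chi t$ matches the claim.

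The main obstacle is this cross-term computation. I would decompose $J_{y}=\tfrac{1}{2}\sum_{i}\sigma_{y}^{(i)}$ and compute $\langle\sigma_{y}^{(i)}W\sigma_{y}^{(j)}W^{\dagger}\rangle$ pairwise. Using $\sigma_{y}^{(j)}\sigma_{z}^{(j)}=-\sigma_{z}^{(j)}\sigma_{y}^{(j)}$ to conjugate $J_{z}^{2}$ past $\sigma_{y}^{(j)}$ yields the key identity $W\sigma_{y}^{(j)}W^{\dagger}=e^{2i\chi t J_{z}^{(j)}\sigma_{z}^{(j)}}\sigma_{y}^{(j)}$, where $J_{z}^{(j)}:=J_{z}-\sigma_{z}^{(j)}/2$ commutes with $\sigma_{z}^{(j)}$. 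For $i=j$ this collapses the product to $\sigma_{y}^{(i)}W\sigma_{y}^{(i)}W^{\dagger}=e^{-2i\chi t J_{z}^{(i)}\sigma_{z}^{(i)}}$; its expectation in $\ket{+}^{\otimes N}$ factorizes in the $\sigma_{z}$-eigenbasis over the remaining $N-1$ qubits, each contributing a factor of $\cos\chi t$, and evaluates to $\cos^{N-1}\chi t$. For $i\neq j$, an analogous factorization reduces the matrix element to a two-spin sum $\sum_{s_{i},s_{j}=\pm 1}(-s_{i}s_{j})e^{-i\chi t s_{i}s_{j}}=4i\sin\chi t$ multiplied by $(2\cos\chi t)^{N-2}/2^{N}$, which is purely imaginary and therefore drops out of the real part. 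Summing the $N$ diagonal contributions gives $\mathrm{Re}\langle J_{y}WJ_{y}W^{\dagger}\rangle=\tfrac{N}{4}\cos^{N-1}\chi t$, and assembling the three pieces completes the proof.
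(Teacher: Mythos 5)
Your proposal is correct, and structurally it follows the same route as the paper: the paper also works with the derivative of the state at $\phi=0$ (phrased via the symmetric logarithmic derivative $\mathcal{L}^{(L)}_{\phi=0}$ rather than your effective generator $G_{\text{eff}}=(L-1)J_{y}+LWJ_{y}W^{\dagger}$, but these are equivalent for a pure-state family), arrives at the same three-term decomposition $L^{2}\mathrm{QFI}^{(1)}+N(L-1)^{2}+4L(L-1)\left(\bra{+}^{\otimes N}e^{iC}J_{y}e^{-iC}J_{y}\ket{+}^{\otimes N}+c.c.\right)$, and then must evaluate the cross term. The one genuine difference is how that cross term is handled: the paper invokes the collective-spin identity $e^{iaJ_{z}^{2}}J_{y}e^{-iaJ_{z}^{2}}=\tfrac{1}{2i}J_{+}e^{2ia(J_{z}+1/2)}+h.c.$ and leaves the rest implicit, whereas you decompose $J_{y}$ into single-qubit Paulis, use $W\sigma_{y}^{(j)}W^{\dagger}=e^{2i\chi t J_{z}^{(j)}\sigma_{z}^{(j)}}\sigma_{y}^{(j)}$, and show explicitly that the diagonal terms give $\tfrac{N}{4}\cos^{N-1}\chi t$ while the off-diagonal terms are purely imaginary and drop out of the real part. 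Your version is more elementary and, in fact, more complete than what is printed: it also implicitly corrects a typo in the paper's displayed formula for $\mathcal{L}^{(L)}_{\phi=0}$, whose $(L-1)$ term should carry $J_{y}$ rather than $e^{iC}J_{y}e^{-iC}$ (as written it would yield $(2L-1)^{2}\mathrm{QFI}^{(1)}$, contradicting the proposition; the paper's subsequent line is consistent with the corrected generator). Your parity argument for the vanishing of the first moments and the checks at $L=0,1$ are sound, and all the asymptotic bookkeeping matches the claimed formula.
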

\begin{proof}
Let $C:=J_{z}^{2}$. The symmetric logarithmic derivative at $\phi=0$ for the one layer protocol is given by
\begin{small}
\begin{equation}
\mathcal{L}^{(1)}_{\phi=0}=-2ie^{iC}J_{y}e^{-iC}\ket{+}\bra{+}^{\otimes N} + 2i \ket{+}\bra{+}^{\otimes N}e^{iC}J_{y}e^{-iC}
\label{eqn:sld}
\end{equation}
\end{small}
from which it follows that 
\begin{small}
\begin{align}
\mathcal{L}^{(L)}_{\phi=0}&=L\mathcal{L}^{(1)}_{\phi=0}\nonumber \\
&+\left( -2i(L-1)e^{iC}J_{y}e^{-iC}\ket{+}\bra{+}^{\otimes N} +h.c.\right).
\end{align}
\end{small}
One then calculates
\begin{small}
\begin{align}
\mathrm{QFI}(\psi^{(L)}_{\phi})&:= \bra{+}^{\otimes N} (\mathcal{L}^{(L)}_{\phi=0})^{2} \ket{+}^{\otimes N} \nonumber \\
&= L^{2}\mathrm{QFI}^{(1)}(\chi,t) + N(L-1)^{2} \nonumber \\
&{}+ 4L(L-1)\left( \bra{+}^{\otimes N} e^{iC}J_{y}e^{-iC}J_{y} \ket{+}^{\otimes N} +c.c \right)
\end{align}
\end{small}
and the last term can be evaluated explicitly by using $e^{iaJ_{z}^{2}}J_{y}e^{-iaJ_{z}^{2}} = {1\over 2i}J_{+}e^{2ia(J_{z}+{1\over 2})} + h.c.$
\end{proof}

A generalization of (\ref{eqn:lla}) that allows one-axis twisting to alternate with rotations is given by
\begin{equation}
    \ket{\psi_{\phi}^{(L)}(\vec{a})}=e^{i\phi J_{y}} \prod_{j=1}^{L} e^{-i\phi J_{y}} e^{ia_{j,2} C}e^{-i\phi J_{y}}e^{ia_{j,1} C}  \ket{+}^{\otimes N}
\label{eqn:lla2}
\end{equation}
where $\vec{a}:=(a_{L,2},a_{L,1},a_{L-1,2},a_{L-1,1},\ldots)$ is the row vector of parameters. Defining the partial sums $\varphi_{\ell}=\sum_{j=1}^{\ell}\vec{a}_{j}$, where $\vec{a}_{j}$ is the $j$-th element of $\vec{a}$, allows one to evaluate \begin{align}
    N^{2}(\Delta \phi)^{2}\big\vert_{\phi =0}&={A(\varphi_{2L})\over B(\lbrace \varphi_{j}\rbrace_{j=1}^{2L})}\nonumber \\
    A(\varphi_{2L})&= 2(N+1)-2(N-1)\cos^{N-2}(2\varphi_{2L}) \nonumber \\
    B(\lbrace \varphi_{j}\rbrace_{j=1}^{2L}) &= N(N-1)^{2}\left( \sum_{j=1}^{2L-1}\sin \varphi_{j}\left( \cos^{N-2}\varphi_{j}  \right.\right. \nonumber \\
    &{} \left. \left. + \cos^{N-2}(2\varphi_{2L}-\varphi_{j}) \right)\vphantom{\sum_{j=1}^{2L-1}}\right)^{2}
    \label{eqn:prec2}
\end{align}
Solving for extrema of (\ref{eqn:prec2}) without restriction of the protocol (\ref{eqn:lla2}) is complicated, even when the method in Theorem \ref{th:1} is used. However, constraining each layer in (\ref{eqn:lla2}) to satisfy the twist-untwist condition $a_{\ell,1}=-a_{\ell,2}$ for all layers $\ell$ results in a simplification of the partial sums, viz., $\varphi_{2\ell}=0$ and $\varphi_{2\ell+1}=a_{L-\ell,2}$. One then finds that (\ref{eqn:prec2}) is minimized for equal strength twist-untwist in each layer, i.e., $\varphi_{2\ell+1}=\tan^{-1}{1\over \sqrt{N-2}}$.

A similar result as in Proposition \ref{prop:ttt} can be obtained for the $L$ layer protocol with $2L$ calls to the parameter $\phi$
\begin{equation}
\ket{\zeta_{\phi}^{(L)}}:= \left( e^{i\phi J_{y}}e^{itJ_{z}^{2}}e^{-i\phi J_{y}}e^{-itJ_{z}^{2}}\right)^{L}\vert +\rangle^{\otimes N}
\end{equation}
which generalizes the protocol $\ket{\zeta_{\phi}}$ introduced as the end of Section \ref{sec:ao}. In fact, not only does (\ref{eqn:ggg}) hold in this case, but also the symmetric logarithmic derivative obeys $\mathcal{L}^{(L)}_{\phi=0}=L\mathcal{L}^{(1)}_{\phi=0}$ which leads to the simpler result $\text{QFI}(\zeta_{\phi}^{(L)})=L^{2}\text{QFI}(\zeta_{\phi}^{(1)})$. The general layered protocol (\ref{eqn:lla2}) can also be analyzed in such a way that each layer has no net rotation.

\section{Finite range twist-untwist protocols\label{sec:fr}}
Rydberg-Rydberg atom interactions between trapped neutral atoms provide a platform for scalable many-body entanglement generation via spin-squeezing \cite{PhysRevLett.112.103601}.
For $N$ distinguishable atoms on a one-dimensional lattice with periodic boundary condition, a finite range, two-local Hamiltonian generalizing the one-axis twisting generator $J_{z}^{2}$ is given by
\begin{align}
H_{K}&={1\over 4} \sum_{j=0}^{N-1}\sum_{\substack{  i=  j-K  \text{ mod } N \\ i\neq j}}^{ j+K  \text{ mod } N } V_{j,i} Z_{j}\otimes Z_{i} 
\label{eqn:h1}
\end{align}
where $K$ is an integer in $[1,{N\over 2})$ representing the interaction range, and where $Z_{j}$ denotes the Pauli $Z$ matrix acting on atom $j$ and the identity operator on all other atoms.

A twist-untwist protocol with respect to $H_{K}$ is defined as in (\ref{eqn:iii}) with the substitution $J_{z}^{2}\rightarrow H_{K}$. For such a protocol, the expression for the denominator of (\ref{eqn:prec}) can be calculated analytically by using the identity
\begin{align}
&{}e^{i\chi t H_{K}}(\sigma_{+})_{r}e^{-i\chi t H_{K}} \nonumber \\
&{} =(\sigma_{+})_{r} \exp \left[ {i\chi t \sum_{\substack{ j=r-K \text{ mod } N \\ j\neq r }}^{ r+K \text{ mod } N }V_{r,j}Z_{j}} \right] .
\end{align}
where $(\sigma_{+})_{r}$ denotes the $\sigma_{+}$ matrix acting on atom $r$ and the identity operator on all other atoms.
The result is
\begin{small}
\begin{align}&{}\del_{\phi}\langle \psi_{\phi}\vert J_{y}\vert \psi_{\phi}\rangle\big\vert_{\phi=0} = \nonumber \\
&{} {1\over 2}\sum_{r=0}^{N-1} \sum_{\substack{\ell = -K \\ \ell \neq 0}}^{K}\left[ \sin \left( \chi  V_{r,r-\ell}t \right) \prod_{\substack{ i=-K  \\ i\neq \ell }}^{ K }\cos \left( \chi  V_{r,r-i}t \right) \right]
\label{eqn:denom}
\end{align}
\end{small}
where in the above equation, $r-\ell$ and $r-i$ are modulo $N$.

We now consider the case of a constant, finite range interaction  \begin{equation}V_{i,j}=\begin{cases} 1 & 0<\vert i-j \vert \le K \\ 0 &\vert i-j \vert > K \text{ or } i=j \end{cases},\label{eqn:knng}\end{equation}
where the inequalities are interpreted modulo $N$. In this case, $V_{i,j}$ is the adjacency matrix of the $K$-nearest neighbors graph. For this choice of $V_{i,j}$, the expression (\ref{eqn:denom}) becomes $NK\sin \chi t \cos^{2K-1}\chi t$, which obtains a maximum at $\chi t=\tan^{-1}\sqrt{1\over 2K-1}$. At this maximum, one obtains the minimal value of $(\Delta \phi)^{2}\vert_{\phi=0}={1\over 2NK\left(1-{1\over 2K}\right)^{2K-1}}$.  For $K=rN$ with ratio $r$, the $N\rightarrow \infty$ Heisenberg scaling is given by as $O(N^{-2})$ with coefficient $e\over 2r$. Further, we obtain an analytical formula for (\ref{eqn:prec}) for the general protocol \begin{equation}\ket{\psi_{\phi}}=e^{ia_{2}H_{K}}e^{-i\phi J_{y}}e^{ia_{1}H_{K}}\ket{+}^{\otimes N},
\label{eqn:fgfg}
\end{equation}
restricting to the regime of short-range interactions $K\le {N-2\over 4}$. The formula is given in Appendix \ref{sec:app1}.

In analogy with Proposition \ref{prop:aaa}, one can use the aforementioned formulas (\ref{eqn:prec4}) to show that the ratio  $\text{QFI}^{-1}/(\Delta \phi)^{2}\vert_{\phi=0}$ evaluated at the critical interaction time $\chi t = \tan^{-1}{1\over \sqrt{2K-1}}$ asymptotes to $(e+e^{-1}-2)^{-1}\approx 0.92$ as $K\rightarrow \infty$. Therefore, the quantum Cram\'{e}r-Rao bound is asymptotically only about 8\% lower that the empirical error in (\ref{eqn:prec}). At the critical interaction time, From the symmetric logarithmic derivative for the state manifold $\ket{\psi_{\phi}}$ the large $K$ asymptotic of $\text{QFI}(\psi_{\phi})$ is (see Appendix \ref{sec:app2} for the calculation)
\begin{align}
\text{QFI}(\psi_{\phi}) &= 4\text{Var}_{e^{\pm itH_{K}}\ket{+}^{\otimes N}}J_{y}\nonumber \\
&\sim 2NK(1-e^{-1})^{2},
\end{align}
which, for $K=rN$ with ratio $r$, implies an $N\rightarrow \infty$ asymptotic of $2rN^{2}(1-e^{-1})^{2}$. Therefore, both the empirical error (\ref{eqn:prec}) and the QFI exhibit Heisenberg scaling at the critical interaction time.

The formulas (\ref{eqn:prec4}) allow one to prove asymptotic optimality for finite range one-axis twist-untwist protocols among the protocols (\ref{eqn:fgfg}). The detailed statement of asymptotic optimality is the statement of the following theorem:
\begin{theorem}
Let $(\Delta \phi)^{2}\big\vert_{\phi=0}$ be defined by (\ref{eqn:prec}), with $\ket{\psi_{\phi}}$ the parameterized state prepared by protocol (\ref{eqn:fgfg}). Further, let $K=r(N-2)$ for a fixed locality parameter $0<r\le {1\over 4}$ and let $a_{j}={c_{j}\over K^{\alpha}}$ , $\alpha \ge 1/2$, for nonzero constants $c_{j}$, $j=1,2$. Then, in the limit $N\rightarrow \infty$ the unique minimum of $NK(\Delta \phi)^{2}\big\vert_{\phi=0}$ occurs at $c_{2}=-c_{1}={1\over \sqrt{2}}$. 
\label{th:2}
\end{theorem}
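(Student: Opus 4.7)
The plan is to follow the template of the proof of Theorem~\ref{th:1}, but using the finite-range analog of formula (\ref{eqn:anal2}), namely the closed-form expression (\ref{eqn:prec4}) derived in Appendix~\ref{sec:app1} for $(\Delta\phi)^{2}\vert_{\phi=0}$ as a function of $(a_{1},a_{2})$ under the short-range constraint $K\le(N-2)/4$. By inspection of (\ref{eqn:denom}) and the analogous second-moment computation that yields (\ref{eqn:prec4}), this formula has the same structural skeleton as (\ref{eqn:anal2}), with each factor of the form $\cos^{N-2}$ replaced by a factor controlled by the coordination number $2K$ (exponents $2K$ or $2K-1$) and with the overall prefactor $N(N-1)^{2}$ in the denominator replaced by a quantity of order $NK$. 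The $NK$ normalization in the theorem statement is chosen precisely to match the single-twist Heisenberg scaling $(\Delta\phi)^{2}\sim e/(2NK)$ recalled in the paragraph preceding the theorem.

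Given (\ref{eqn:prec4}), I would substitute $a_{j}=c_{j}/K^{\alpha}$ and reduce the argument to a three-way case split parallel to Theorem~\ref{th:1}, using the controlling asymptotics $\cos^{2K}(x/K^{\alpha})\to 1$ for $\alpha>1/2$ and $\cos^{2K}(c/\sqrt{K})\to e^{-c^{2}}$ for $\alpha=1/2$. Case (i), $\alpha>1/2$: all cosine factors tend to unity, the numerator of $(\Delta\phi)^{2}\vert_{\phi=0}$ remains $O(1)$, while the denominator is dominated by $NK\sin^{2}a_{2}\sim Nc_{2}^{2}K^{1-2\alpha}$, so $NK(\Delta\phi)^{2}$ diverges like $K^{2\alpha-1}/c_{2}^{2}$. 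Case (ii), $\alpha=1/2$ with $c_{1}+c_{2}\ne 0$: the numerator retains an order-$NK$ contribution from $1-\cos^{2(2K-1)}(2(a_{1}+a_{2}))$ whose limit is strictly positive, whereas the denominator is at most of order $N$, so $NK(\Delta\phi)^{2}$ again diverges. Case (iii), $\alpha=1/2$ with $c_{2}=-c_{1}=c$, is therefore the unique regime in which both numerator and denominator have finite nonzero limits.

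In case (iii) the argument $2(a_{1}+a_{2})=0$ forces the numerator to collapse to a constant, and combining the remaining limits $\sin^{2}(c/\sqrt{K})\sim c^{2}/K$ and $\cos^{2K}(c/\sqrt{K})\to e^{-c^{2}}$ reduces $NK(\Delta\phi)^{2}$, up to a positive $c$-independent constant, to the one-variable function $g(c)=e^{2c^{2}}/c^{2}$. Differentiating gives $g'(c)=2e^{2c^{2}}(2c^{2}-1)/c^{3}$, so $g$ has a unique stationary point on $(0,\infty)$ at $c^{2}=1/2$, which must be the minimum since $g\to\infty$ at both $0$ and $\infty$. Thus $c_{2}=-c_{1}=1/\sqrt{2}$ is the unique asymptotic minimizer, as claimed; the resulting limit value $e/2$ is consistent with the single-twist optimum cited above.

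The main obstacle I anticipate is not the scalar optimization in case (iii) but controlling the subleading corrections to the formula (\ref{eqn:prec4}) that arise from the modular neighbor structure in (\ref{eqn:denom}) and from the analogous bookkeeping in the $J_{y}^{2}$ expectation; one must confirm that these corrections do not spoil the clean limit $\cos^{2K}(c/\sqrt{K})\to e^{-c^{2}}$ or generate spurious additional stationary points. The hypothesis $r\le 1/4$, i.e., $K\le(N-2)/4$, is exactly what keeps neighborhood overlaps mild enough that the Appendix~\ref{sec:app1} formula retains the product-over-neighbors factorization visible in (\ref{eqn:denom}) and the reduction to the above scalar problem is rigorous.
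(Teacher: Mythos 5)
Your proposal follows essentially the same route as the paper's proof: take the closed-form expressions (\ref{eqn:prec4}) as given, substitute $a_{j}=c_{j}/K^{\alpha}$, split into the three cases $\alpha>1/2$, $\alpha=1/2$ with $c_{1}+c_{2}\neq 0$, and $\alpha=1/2$ with $c_{2}=-c_{1}$, and in the last case minimize $e^{2c^{2}}/(4c^{2})$ at $c=1/\sqrt{2}$; the limiting value $e/2$ and the identification of the $NK$ normalization also match. The one slip is in your case (ii): the denominator $\left(\partial_{\phi}\langle J_{y}\rangle\right)^{2}$ is $\Theta(N^{2}K)$ there, not ``at most of order $N$'' --- the sum over $j$ in (\ref{eqn:prec4}) consists of $K$ positive $\Theta(1)$ terms while $\sin^{2}a_{2}\sim c_{2}^{2}/K$, so $\partial_{\phi}\langle J_{y}\rangle=\Theta(N\sqrt{K})$. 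The divergence conclusion survives with the corrected count, since $NK\cdot\Theta(NK)/\Theta(N^{2}K)=\Theta(K)\rightarrow\infty$, which is exactly the paper's bookkeeping ($N^{2}K^{2}$ in the numerator of $f$ against $N^{2}K$ in its denominator), so this is a quantitative misstatement rather than a gap in the argument.
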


\begin{proof}
Call $f(a_{1},a_{2}) :=NK(\Delta \phi)^{2}\big\vert_{\phi=0}$ and note that we restrict to $a_{1}<0$ as in Theorem \ref{th:1}. The condition $K=r(N-2)$ for a fixed $0<r\le {1\over 4}$ allows to apply formula (\ref{eqn:prec4}) to calculate  $NK(\Delta \phi)^{2}\big\vert_{\phi=0}$. The factor of $NK$ in the definition of $f$ is so that the $N\rightarrow \infty$ limit of $\nabla f$ is not zero pointwise; without this factor, $f$ is asymptotically zero. Consider first $\alpha>1/2$. In order to consider the $N\rightarrow \infty$ asymptotics of (\ref{eqn:prec4}) for $K=r(N-2)$, it is useful to consider the $K\rightarrow \infty$ asymptotic, then impose the scale factor $r$ at the end.  From (\ref{eqn:anal2}) and the fact that $\cos^{m}{x\over K^{\alpha}}\sim 1$ for $\alpha >1/2$ and large $K$, it follows that $f(a_{1},a_{2})\sim {K^{2\alpha -1}\over c_{2}^{2}}$. Therefore, for $\alpha >1/2$, $f$ is arbitrarily large as $K$ increases. 

For $\alpha=1/2$, if $c_{2}\neq -c_{1}$, then the numerator of $f$ scales as $N^{2}K^{2}$ for large $K$ whereas the denominator scales as $N^{2}K$ for large $K$ (actually, the numerator of $f$ scales only as $N^{2}K$ if $a_{2}=-a_{1}+{(2\ell +1)\pi\over 2}$ for integer $\ell$, but in this situation at least one of the $a_{j}$ do not vanish with $K$ and, in particular, does not correspond to the form $a_{j}={c_{j}\over \sqrt{K}}$). Therefore, if $\alpha = 1/2$ and $c_{2}\neq -c_{1}$, then $f$ is arbitrarily large as $K$ increases. However, if $\alpha=1/2$ and $c_{2}= -c_{1}$, then $f\sim {1\over 4c_{2}^{2}e^{-2c_{2}^{2}}}$, which does not scale with $K$. The asymptotic is minimized when $c_{2}={1\over \sqrt{2}}$. Taking $K=r(N-2)$ converts the $K\rightarrow \infty$ asymptotic to $N\rightarrow \infty$ asymptotic.
\end{proof}

Note that in Theorem \ref{th:2} we have taken $K$ to be a function of $N$. Unlike the $N^{-1/2}$ decay of $a_{1(2)}$ in the case of asymptotically optimal twist-untwist protocols for full range interactions in Theorem \ref{th:1}, a $K^{-1/2}$ decay is observed in the case of range $K$ interactions. As a consequence, the rate of convergence of the optimal protocol of the form (\ref{eqn:fgfg}) to a finite range one-axis twist-untwist protocol is controlled by the locality parameter $r$. Similar to Theorem \ref{th:1}, it follows as a corollary of Theorem \ref{th:2} that the finite range one-axis twist-untwist protocol with $a_{1}\sim -{1\over \sqrt{2K}}$, $a_{2}\sim {1\over \sqrt{2K}}$ is asymptotically optimal among protocols of the form (\ref{eqn:fgfg}) restricted to weak nonlinearities $a_{j}={c_{j}\over K^{\alpha}}$, $\alpha\ge 1/2$. This is due to the fact that for all cases except $\alpha=1/2$, $c_{2}=-c_{1}$, $(\Delta \phi)^{-2}\vert_{\phi=0} = o(NK)$. However, for $\alpha=1/2$ and $c_{2}=-c_{1}$, $(\Delta \phi)^{-2}\vert_{\phi=0} = O(NK)$ with the maximum attained for $c_{2}=-c_{1}={1\over \sqrt{2}}$.
 
It is also possible to obtain formulas analogous to (\ref{eqn:prec4}) in dimension $D$ when the size of the interaction region scales as $O(r^{D}N^{D})$.  For instance, in the case of an $N\times N$ square lattice ($D=2$) with periodic boundary condition and with constant interaction on squares of $K\times K$ sites (with $K$ odd and $K\le {N-2\over 4}$), one finds that for finite range twist-untwist protocol $\ket{\psi_{\phi}}=e^{i\chi t H_{K}}e^{-i\phi J_{y}}e^{-i\chi t H_{K}}\ket{+}^{\otimes N^{2}}$ , $(\Delta \phi)^{-2}\vert_{\phi=0}$ exhibits $N^{2}K^{2}$ scaling for $\chi t\sim {1\over K}$. Further, the quantum Fisher information of $O(N^{2}K^{2})$ is obtained for $\ket{\psi_{\phi}}$ at this optimal interaction time scaling (with $K=rN$ for some ratio $r$, this expresses Heisenberg scaling for $N^{2}$ spins). To analyze the optimality of the protocol $\ket{\psi_{\phi}}$ among protocols of the form $e^{ia_{2} H_{K}}e^{-i\phi J_{y}}e^{ia_{1} H_{K}}\ket{+}^{\otimes N^{2}}$ with weak interaction strengths $a_{j}={c_{j}\over K^{2\alpha}}$, $\alpha \ge 1/2$, one considers the function $N^{2}K^{2}(\Delta \phi)^{2}\vert_{\phi=0}$ and finds that it has a unique asymptotic global minimum at $a_{2}=-a_{1}={1\over K}$ when $\alpha =1/2$.

\subsection{Translation-invariant finite range twist-untwist protocols\label{sec:gen}}
 We now consider physical systems that exhibit Bose symmetry as in Section \ref{sec:ao}, but have spatial interactions inherited from the system of distinguishable spins in Section \ref{sec:fr}. Specifically, we consider the physical scenario of two-level bosonic atoms in a ring-shaped optical lattice of $N$ sites. The internal states of the bosons are assumed to interact pairwise over a distance of $K$ sites.  The two body interaction can be written as a Heisenberg model by using the boson operators $a$ and $b$ for the internal states:
 \begin{equation}
     {1\over 4}\sum_{i=0}^{N-1}\sum_{\substack{  i=  j-K  \text{ mod } N \\ i\neq j}}^{ j+K  \text{ mod } N }  V_{j,i} \left( a^{\dagger}_{j}a_{j} -b_{j}^{\dagger}b_{j}\right) \left( a^{\dagger}_{i}a_{i} -b_{i}^{\dagger}b_{i}\right).
     \label{eqn:hp}
 \end{equation}  We assume that $a^{\dagger}_{j}a_{j} +b_{j}^{\dagger}b_{j}=1$ for all $j$, so that the sites have unit occupancy. When restricted to this subspace, it is clear that (\ref{eqn:hp}) is equal to $H_{K}$ in (\ref{eqn:h1}). However, it is possible to further restrict the interaction Hamiltonian (\ref{eqn:hp}) to describe dynamics in the translation invariant subspace of spin waves, i.e., the states spanned by the basis
 \begin{align}
     &{}\lbrace \ket{0,1,\ldots,0,1} \rbrace \cup \nonumber \\
     &{} \lbrace \sum_{\substack{i_{M}<\ldots <i_{1}\\i_{\ell}=0,\ldots,N-1}}\prod_{\ell=1}^{M}a_{i_{\ell}}^{\dagger}b_{i_{\ell}} \ket{0,1,\ldots,0,1} \rbrace_{M=1}^{N}
     \label{eqn:swb}
 \end{align}
 where $M$ is the excitation number of the spin wave and $\ket{n_{a,0},n_{b,0},\ldots,n_{a,N-1},n_{b,N-1}}$ is an insulating state with $a_{j}^{\dagger}a_{j}=n_{a,j}$ $b_{j}^{\dagger}b_{j}=n_{b,j}$.
 In terms of matrix elements, the Hamiltonian (\ref{eqn:hp}) after the projection to the spin wave subspace is equal to the Hamiltonian $\tilde{H}_{K}:= P_{B}H_{K}P_{B}$, where $P_{B}$ is the projection of  $(\mathbb{C}^{2})^{\otimes N}$ to the symmetric subspace. For simplicity, we again restrict to the constant interaction potential $V_{i,j}=1$. In the special case of $N\equiv 1\text{ mod }2$, $K={N-1\over 2}$, and $V_{i,j}=1$, then (\ref{eqn:h1}) is equal to $J_{z}^{2}-{N\over 4}$, so no projection is needed.  For the general case, one finds that the matrix elements of $\tilde{H}_{K}$ in the orthonormal basis of Dicke states are given by
\begin{align}
\langle \psi_{n}\vert \tilde{H}_{K}\vert \psi_{n'}\rangle &= {\delta_{n,n'}\over {N\choose n}}\sum_{x:\text{Ham}(x)=n}\langle x \vert H_{K}\vert x\rangle  \nonumber \\
\ket{\psi_{n}}&:= {1\over \sqrt{N\choose n}}\sum_{x:\text{Ham}(x)=n}\ket{x} \, , \, n=0,\ldots,N
\end{align}
where $\ket{x}$ is a state in the computational basis and $\text{Ham}(x)$ is its associated Hamming weight. The twist-untwist protocol in this subspace is defined by the parameterized state
\begin{equation}
\ket{\psi_{\phi}}:=e^{i\chi t \tilde{H}_{K}}e^{-i\phi J_{y}}e^{-i\chi t \tilde{H}_{K}}\ket{+}^{\otimes N}
\label{eqn:protf}
\end{equation}
where the state $\ket{+}^{\otimes N}$ corresponds to a superposition of the spin wave basis states in (\ref{eqn:swb}).

Because the exact computation of the matrix elements of $\tilde{H}_{K}$ takes exponential time in $N$, it is useful to define a model bosonic Hamiltonian for $\tilde{H}_{K}$ that helps to analyze the metrological gain obtained in the protocol (\ref{eqn:protf}). Specifically, consider the Hamiltonian 
\begin{align}
\tilde{H}_{K}^{(2)}&={1\over 4} \sum_{j=0}^{N-1}\sum_{\substack{  i=  j-K  \text{ mod } N \\ i\neq j}}^{ j+K  \text{ mod } N }  V_{j,i} P_{B}Z_{j}P_{B}Z_{i}P_{B}.
\label{eqn:h2}
\end{align}
The properties of $\tilde{H}_{K}^{(2)}$ that make it useful as a model interaction for $\tilde{H}_{K}$ are established in the following proposition.
\begin{proposition}
If $V_{i,j}=1$, then $\tilde{H}_{K}^{(2)}={2KJ_{z}^{2}\over N}$. If $V_{i,j}>0$ for all $i,j$, then $\Vert \tilde{H}_{K}^{(2)}\Vert = \Vert\tilde{H}_{K} \Vert$.
\end{proposition}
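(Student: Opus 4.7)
The plan is to exploit the permutation symmetry built into the inner projectors $P_{B}$ appearing in $\tilde{H}_{K}^{(2)}$ so as to reduce this operator to a function of $J_{z}^{2}$, and then to prove the norm equality by noting that the ferromagnetic computational-basis eigenstate of $H_{K}$ already sits inside the symmetric subspace. The heart of the argument is the single-site identity
\begin{equation}
P_{B}Z_{j}P_{B}=\frac{2}{N}\,J_{z}\,P_{B},
\end{equation}
which I would establish as follows: $Z_{j}$ is diagonal in the computational basis, so its Dicke-basis matrix elements vanish off-diagonally; permutation invariance of $|\psi_{n}\rangle$ forces $\langle\psi_{n}|Z_{j}|\psi_{n}\rangle$ to be independent of $j$; and averaging on $j$ with $\sum_{j}Z_{j}=2J_{z}$ yields the value $(N-2n)/N=(2/N)\langle\psi_{n}|J_{z}|\psi_{n}\rangle$. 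Linearity then extends the identity to the whole symmetric subspace.

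Applying this identity twice and using $[J_{z},P_{B}]=0$ gives $P_{B}Z_{j}P_{B}Z_{i}P_{B}=(4/N^{2})J_{z}^{2}P_{B}$, so that
\begin{equation}
\tilde{H}_{K}^{(2)}=\frac{S}{N^{2}}\,J_{z}^{2}\,P_{B},\qquad S:=\sum_{\substack{j,i\\ 0<|i-j|\le K}}V_{j,i}.
\end{equation}
On the $N$-site periodic ring each site has exactly $2K$ in-range partners (using $K<N/2$), so for $V_{j,i}=1$ the prefactor is $S=2NK$ and the formula collapses to $\tilde{H}_{K}^{(2)}=(2K/N)J_{z}^{2}$, which is the first claim. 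Since the largest $|J_{z}|$-eigenvalue on the symmetric subspace is $N/2$, the displayed formula also yields $\|\tilde{H}_{K}^{(2)}\|=(S/N^{2})(N/2)^{2}=S/4$ for every $V_{j,i}>0$.

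For $\|\tilde{H}_{K}\|$ I would use that $H_{K}$ is diagonal in the computational basis with eigenvalues $\tfrac{1}{4}\sum V_{j,i}z_{j}z_{i}$, $z\in\{\pm 1\}^{N}$. With $V_{j,i}>0$ every such eigenvalue lies in $[-S/4,S/4]$, and the upper bound is attained at the ferromagnet $z=(1,\ldots,1)$; that string is the Dicke state $|\psi_{0}\rangle$, which is symmetric, so $S/4$ is an eigenvalue of $\tilde{H}_{K}=P_{B}H_{K}P_{B}$ itself. Combined with $\|\tilde{H}_{K}\|\le\|H_{K}\|\le S/4$ this gives $\|\tilde{H}_{K}\|=S/4=\|\tilde{H}_{K}^{(2)}\|$, which is the second claim.

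The argument is largely elementary; the main subtlety to keep straight is that the inner $P_{B}$ factors in the definition of $\tilde{H}_{K}^{(2)}$ are precisely what convert each $Z_{j}$ into a permutation-symmetric quantity — without them, $Z_{j}Z_{i}$ differs from $(2J_{z}/N)^{2}$ on the symmetric subspace, which is why $\tilde{H}_{K}$ and $\tilde{H}_{K}^{(2)}$ still disagree off-diagonally in the Dicke basis even though they share a common operator norm. Positivity $V_{j,i}>0$ enters only to guarantee that the extremizer of $H_{K}$ is a ferromagnet (hence symmetric); for indefinite $V_{j,i}$ the extremizer of $H_{K}$ need not live in the symmetric subspace and the norm equality can fail.
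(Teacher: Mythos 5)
Your proof is correct, and for the first claim it follows essentially the same route as the paper: both arguments hinge on the single-site identity $P_{B}Z_{j}P_{B}=\tfrac{2}{N}J_{z}P_{B}$ and then count the $2NK$ in-range pairs on the ring. The paper establishes that identity by an explicit binomial computation, $\langle x'|P_{B}Z_{j}P_{B}|x\rangle=\bigl(\binom{N-1}{n}-\binom{N-1}{n-1}\bigr)/\binom{N}{n}=1-\tfrac{2n}{N}$, whereas you get it by observing that $Z_{j}$ preserves Hamming weight (so Dicke-basis matrix elements are diagonal) and that permutation invariance makes $\langle\psi_{n}|Z_{j}|\psi_{n}\rangle$ independent of $j$, after which averaging against $\sum_{j}Z_{j}=2J_{z}$ fixes the value; this is a cleaner, computation-free derivation of the same lemma. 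Where your write-up genuinely adds something is the norm equality: the paper simply asserts $\Vert\tilde{H}_{K}^{(2)}\Vert=(\ket{0}^{\otimes N},\tilde{H}_{K}^{(2)}\ket{0}^{\otimes N})=(\ket{0}^{\otimes N},\tilde{H}_{K}\ket{0}^{\otimes N})=\Vert\tilde{H}_{K}\Vert$ without explaining why either norm is attained on the ferromagnetic state, while you supply the missing two-sided argument — $\Vert\tilde{H}_{K}\Vert\le\Vert H_{K}\Vert\le S/4$ from the diagonal spectrum of $H_{K}$ with $V_{j,i}>0$, matched from below because $z=(1,\ldots,1)$ is a symmetric eigenvector, and $\Vert\tilde{H}_{K}^{(2)}\Vert=(S/N^{2})(N/2)^{2}=S/4$ from the maximal $J_{z}^{2}$ eigenvalue. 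Your closing remark correctly identifies why positivity of $V_{j,i}$ is needed (it forces the extremizer of $H_{K}$ to be a ferromagnet, hence symmetric), a point the paper leaves implicit. The only cosmetic discrepancy is that you retain the projector $P_{B}$ in $\tilde{H}_{K}^{(2)}=(2K/N)J_{z}^{2}P_{B}$ where the paper drops it; since $\tilde{H}_{K}^{(2)}$ annihilates the orthogonal complement of the symmetric subspace by construction, your version is the more precise statement.
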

\begin{proof}
First, note that $P_{B}Z_{j}P_{B}={1\over N}\sum_{i=1}^{N}Z_{i}$, which is proven by considering computational basis states $\ket{x}, \ket{x'}$: if $\text{Ham}(x)\neq \text{Ham}(x')$ then $\langle x'\vert P_{B}Z_{j}P_{B}\vert x\rangle =0$; if  $\text{Ham}(x)= \text{Ham}(x')=n$ then 
\begin{align}
\langle x'\vert P_{B}Z_{j}P_{B}\vert x\rangle &= {1\over {N\choose n}}\sum_{x: \text{Ham}(x)=n}(-1)^{x_{j}} \nonumber \\
&= {{N-1\choose n} - {N-1\choose n-1} \over {N\choose n}} \nonumber \\
&={1-{2K\over N}} \nonumber \\
&=\langle N-n,n\vert {2J_{z}\over N}\vert N-n,n\rangle \label{eqn:ooo} ,
\end{align} and ${2J_{z}\over N}={1\over N}\sum_{i=0}^{N-1}Z_{i}$, which proves the first statement. Note that the numerator of the second line of (\ref{eqn:ooo}) is the difference between the number of ways for $n$ ones ($n-1$ ones) to appear in $x$ given that $x_{j}=0$ ($x_{j}=1$). To prove the second statement, note that for $V_{i,j}>0$,
\begin{align}
\Vert \tilde{H}_{K}^{(2)}\Vert &=\left( \ket{0}^{\otimes N},\tilde{H}_{K}^{(2)}\ket{0}^{\otimes N} \right) \nonumber \\
&=\left( \ket{0}^{\otimes N},\tilde{H}_{K}\ket{0}^{\otimes N} \right)\nonumber \\
&=  \Vert \tilde{H}\Vert.
\end{align}

\end{proof}

Numerical computation of $(\Delta \phi)^{2}\vert_{\phi=0}$ for the protocol (\ref{eqn:protf}) using both $\tilde{H}_{K}$ and $\tilde{H}^{(2)}_{K}$ for $N=16$ is shown in Fig.\ref{fig:ooo}. The $K^{-1/2}$ decay of the optimal twist-untwist parameters which appeared in the analysis of the protocol (\ref{eqn:fgfg}) for distinguishable two-level atoms is not observed. Instead, we conclude that the translation invariant twist-untwist protocol (\ref{eqn:protf}) effectively converts a finite interaction range to a multiplicative factor in the interaction strength of the one axis twisting Hamiltonian of Eq.(\ref{eqn:iii}). This results in longer interaction times required to reach maximal sensitivity for short-range bosonic twist-untwist protocols, but independence of the maximal sensitivity on the interaction range. 

\section{Discussion}

Our theorems show that the twist-untwist protocol (\ref{eqn:iii}), and its generalization to constant, finite range one-axis twisting generators, is asymptotically optimal among protocols that apply two calls to asymptotically weak one-axis twisting evolutions separated by a call to the rotation parameter of interest. We expect that our proof methods also allow one to obtain analogous results for more general spin squeezing interactions, e.g., two-axis twisting, or twist-and-turn generators \cite{PhysRevA.97.053618}. Some aspects of the present results indicate directions for future research. Firstly, the assumption that the initial Bose-Einstein condensed state $\ket{+}^{\otimes N}$ can be generated with fixed particle number $N$ is an experimentally demanding one. Therefore, analyses of twist-untwist protocols in the presence of noise could lead to more general statements of optimality. Secondly, the weak nonlinearity constraints in Sections \ref{sec:ao}, \ref{sec:fr} are well-motivated because the respective values of $(\Delta\phi)^{2}\vert_{\phi=0}$ for the twist-untwist protocols are minimized and exhibit Heisenberg scaling for such interaction times. It would be useful to design more general quantum metrology protocols that: 1. subsume the present twist-untwist protocol at weak nonlinearities, but can be further extended to nearly saturate the QFI for all interaction times, 2. still make use of a total spin readout. The present work provides a foundation for future analyses of imperfect or noisy, generalized twist-untwist protocols for atom-based sensing.

\acknowledgements

The authors were supported by the Laboratory Directed Research and Development (LDRD) program of Los Alamos National Laboratory (LANL) under project number 20210116DR. Michael J. Martin was also supported by supported by the U.S. Department of Energy, Office of Science, National Quantum Information Science Research Centers, Quantum Science Center. Los Alamos National Laboratory is managed by Triad National Security, LLC, for the National Nuclear Security Administration of the U.S. Department of Energy under Contract No. 89233218CNA000001.

\bibliography{phasebib.bib}

\onecolumngrid
\appendix
\section{\label{sec:app1}Formula for (\ref{eqn:prec}) for finite range one axis twist-untwist protocol}
The formulas for the numerator and denominator of (\ref{eqn:prec}) that allow to prove Theorem \ref{th:2} are given by
\begin{align}
    \text{Var}_{\ket{\psi_{\phi}}}J_{y}\big\vert_{\phi=0}
    &= {1\over 2} \left(J_{+}\ket{\psi_{\phi}},J_{+}\ket{\psi_{\phi}} \right) \big\vert_{\phi=0} -{1\over 2}\text{Re}\left(J_{-}\ket{\psi_{\phi}},J_{+}\ket{\psi_{\phi}} \right) \big\vert_{\phi=0} \nonumber \\ &= {N\over 4}\left( 1+\sum_{j=0}^{K-1}\Bigg[ \left(1-\cos^{K+j-1}(2a_{1}+2a_{2})\right)\cos^{2K-2j}(a_{1}+a_{2})  \right. \nonumber \\
    &{} \hspace{1cm}+ \left.  \left(1-\cos^{j+1}(2a_{1}+2a_{2})\right)\cos^{4K-2j-2}(a_{1}+a_{2})\Bigg] \vphantom{\sum_{j=0}^{K-1}}\right) \nonumber \\
    \del_{\phi}\langle \psi_{\phi}\vert J_{y} \vert \psi_{\phi}\rangle\big\vert_{\phi=0} &= -2\text{Im} \left( e^{ia_{1}H_{K}}J_{y}e^{ia_{2}H_{K}}\ket{+}^{\otimes N},J_{y}e^{i(a_{1}+a_{2})H_{K}}\ket{+}^{\otimes N} \right) \nonumber \\
    &= {N\over 2}\sin a_{2}\sum_{j=0}^{K-1}\left( \cos^{K+j-1}a_{2} + \cos^{K+j-1}(2a_{1}+a_{2}) \right)(\cos a_{1}\cos(a_{1}+a_{2}))^{K-j}
    \label{eqn:prec4}
    \end{align}
    which are valid for $1\le K \le {N-2\over 4}$. The upper restriction on $K$ is imposed because we have assumed periodic boundary conditions.
    
    \section{\label{sec:app2}$K\rightarrow \infty$ asymptotics for $\text{QFI}(\psi_{\phi})$ at optimal interaction time}
    To calculate $\text{QFI}(\psi_{\phi}):= 4\text{Var}_{e^{-itH_{K}}\ket{+}^{\otimes N}}J_{y}$ at $t=\tan^{-1}{1\over \sqrt{2K-1}}$, apply the formula $1-\cos^{\ell}(2\tan^{-1}{1\over \sqrt{2K-1}}) = 1-\left( {K-1\over K}\right)^{\ell}$ for $\ell \in \mathbb{Z}_{\ge 0}$ to the formula
    \begin{align}
    4\text{Var}_{e^{-itH_{K}}\ket{+}^{\otimes N}}J_{y}&= N\left( 1+\sum_{j=1}^{K}\left[ \vphantom{\sum_{j=1}^{K}} (1-\cos^{j+1}(2t))\cos^{4K-2j-2}(t) \right. \right.\nonumber \\
    &{} \left. \left. +(1-\cos^{K+j-1}(2t))\cos^{2K-2j}t \vphantom{\sum_{j=1}^{K}}\right] \right).
    \end{align}
    Summing the geometric series gives that the right hand side of the above equation is equal to $N(1+g_{1}(K)+g_{2}(K))$ with
    \begin{align}
    g_{1}(K)&:= (2K-1)(1-{1\over 2K})^{K}\left( \left( {2K\over 2K-1} \right)^{K}-1\right) - \left( {2K^{2}-3K+1\over 2K^{2}}\right)^{K}{(2K-1)K\over K-1}\left( 1-\left( {2(K-1)\over 2K-1} \right)^{K} \right)\nonumber \\
    &\sim 2K(e^{-1}-e^{-1/2})(e^{-1}-e^{1/2}) \nonumber \\
    g_{2}(K)&:= 2K\left( 1-{1\over 2K}\right)^{2K} \left( \left( \left({2K\over 2K-1}\right)^{K}-1\right) -{K-1\over K}\left( 1- \left({2(K-1)\over 2K-1}\right)^{K} \right) \right)\nonumber \\
    &\sim 2K(e^{-1/2}-e^{-1})(1-e^{-1/2}).
    \end{align}
    From these asymptotics, one obtains $\text{QFI}(\psi_{\phi}) \sim 2KN(1-e^{-1})^{2}$ as stated in the main text.

\end{document}